
\documentclass{llncs}
\usepackage{times}
\usepackage{amsfonts,  amsmath, amssymb}

\usepackage[compress]{natbib}
\bibpunct{[}{]}{,}{n}{}{,}

\usepackage[dvips]{color}
\usepackage{graphicx}
\usepackage{epstopdf}

\marginparwidth 0pt\marginparsep 0pt
\topskip 0pt\headsep 0pt\headheight 0pt
\oddsidemargin 0pt\evensidemargin 0pt
\textwidth 6.5in \topmargin 0pt\textheight 9.0in




\newcommand{\xvec}{x}
\newcommand{\zvec}{z}

\newcommand{\std}{ \sigma }

\newcommand{\var}{ v }
\newcommand{\kappavar}{\kappa}

\newcommand{\route}{p}
\newcommand{\paths}{\mathcal{P}}

\newcommand{\flow}{f}

\newcommand{\N}{\mathbb{N}}
\newcommand{\R}{\mathbb{R}}

\newcommand{\pathcost}{Q}

\newcommand{\pcdev}{\pathcost^{\gamma}}
\newcommand{\pc}{\pathcost^0}
\newcommand{\prav}{PRA} 
\newcommand{\pras}{PRA} 


\newcommand{\xx}{x}   
\newcommand{\zz}{z}   
\newcommand{\CC}{C} 
\newcommand{\GV}{V}
\newcommand{\GA}{E}

\definecolor{myred}{rgb}{0.95, 0.01, 0.01}
\definecolor{myblue}{rgb}{0.01, 0.01, 0.9}

\newcommand{\todo}[1]{}

\title{Asymptotically tight bounds for inefficiency\\in risk-averse selfish routing}
\author{Thanasis Lianeas\inst{1} \and Evdokia Nikolova\inst{1} \and Nicolas E. Stier-Moses\inst{2}}
\institute{University of Texas at Austin \and Universidad Torcuato Di Tella}

\begin{document}

\maketitle

\begin{abstract}
We consider a nonatomic selfish routing model with independent stochastic travel times, represented by mean and variance latency functions for each edge that depend on their flows. In an effort to decouple the effect of risk-averse player preferences from selfish behavior on the degradation of system performance, \citet{Nikolova-Stier15} defined the concept of the {\em price of risk aversion} as the worst-case ratio of the cost of an equilibrium with risk-averse players (who seek risk-minimizing paths, for an appropriate definition of risk) and that of an equilibrium with risk-neutral users (who minimize the mean travel time of a path). For risk-averse users who seek to minimize the mean plus variance of travel time on a path, they proved an upper bound on the price of risk aversion, which is independent of the mean and variance latency functions, and grows linearly with the size of the graph and players' risk-aversion. 

In this follow-up paper, we provide a matching lower bound for graphs with number of vertices equal to powers of two, via the construction of a graph family inductively generated from the Braess graph.  In contrast to these {\em topological} bounds that depend on the topology of the network, we also provide conceptually different bounds, which we call {\em functional}. These bounds depend on the class of mean latency functions that are allowed and provide characterizations that are independent of the network topology.  The functional upper bound was first derived by \citet{meir-wronggame} in a different context with different techniques; we offer a simpler, direct proof that is inspired by a classic proof technique using variational inequalities~\cite{css-congestion}.  We also supplement the upper bound with a new asymptotically-tight lower bound, derived from the same graph construction as the topological lower bound.  Thus, we offer a conceptually new perspective and understanding of both this and classic congestion game settings in terms of the {\em functional} versus {\em topological} view of efficiency loss.  

Our third contribution is a tight bound on the price of risk aversion for a family of graphs that generalize series-parallel graphs and the Braess graph.  That bound applies to both users minimizing the mean plus variance (mean-var) of a path, as well as to users minimizing the mean plus standard deviation (mean-stdev) of a path---a much more complex model of risk-aversion due to the cost of a path being non-additive over edge costs. This is a refinement of previous results in~\cite{Nikolova-Stier15} that characterized the price of risk-aversion for series-parallel graphs and for the Braess graph.  The main question left open is to upper bound the price of risk-aversion in the mean-stdev model for general graphs; our lower bounds apply to both the mean-var and the mean-stdev models. 
\end{abstract}


\section{Introduction}\label{sec:intro}

One of the key challenges of making optimal routing decisions is the phenomenon
of congestion: the fact that the travel time along a link increases with the
number of users on that link.  Thus, a user deciding on her optimal route needs
to take into account the routing decisions of other users in the network. 
Networks subject to congestion lead to significant tensions between the local
goals of users to minimize their travel times and the global goal of the
network planner to minimize the total travel time of all users.  The challenge
was investigated in the early work of \citet{wardrop-traffic} and
\citet{beckmann-transportation} by formalizing congestion effects
into a game theoretic model of routing and by defining and analyzing the
traffic assignments or flows resulting from the two conflicting goals, known as
the Wardrop equilibrium and the social optimum, respectively. 

The desire to understand and precisely quantify the severity of the tension
between equilibrium and social optimum, or, in other words, to quantify the
degradation of system performance due to selfish behavior, inspired the
definition of the price of anarchy, which, informally speaking, measures
equilibrium inefficiency relative to a socially-optimal solution.
Consequently, routing games have been central to the development of algorithmic
game theory and have inspired the intensive study of the price of anarchy in
many settings with incentives beyond routing.  At the same time, routing games
continue to be a rich source of new research questions driven by the need to
add realism to network models and to improve real-life applications. 

Indeed, routing is fundamental to diverse applications affecting everyday life
including transportation, telecommunication networks, robotics, task planning,
etc.  All of these applications suffer inherent uncertainty in the network
parameters such as travel times and demands, which can significantly alter
individual routing choices and completely throw off a predicted equilibrium
solution and its efficiency, due to players' risk aversion. For
example, \citet{PiliourasNS:2013} illustrate that the price of
anarchy results are extremely sensitive to the modeling of risk-averse
preferences and show that the price of anarchy may in some models decrease
while in others become unbounded.  

Incorporating risk-aversion in routing games is particularly challenging in
general due to the often nonlinear nature of risk attitudes.  For example, even
finding a best response, which is a minimum-risk path according to some
appropriate definition of risk, may lead to an algorithmic problem of unknown
complexity that we currently do not know how to solve in polynomial
time~\cite{brand-shochSP,nikolova-ssp,nikolova-approx}.  On the other hand,
even for simpler risk-averse objectives that are additive and algorithmically
tractable, understanding the effect of risk on equilibrium inefficiency may
require fundamentally different techniques from the ones used so far to analyze
the price of anarchy~\cite{Nikolova-Stier15}.

There has been an increased effort in recent years to model risk-averse
preferences in routing games and understand the effect of such player
preferences on network
equilibria~\cite{os-rweTS,NikolovaM11,nie-perceq,AngelidakisFL13,PiliourasNS:2013,nikolova-stochWE,Nikolova-Stier15,cominetti-riskaverserouting}.
We follow the mean-variance risk model considered by 
\citet{Nikolova-Stier15} as well as the mean-stdev model considered
there and in their previous work~\cite{nikolova-stochWE}.
Risk-averse agents are postulated to minimize a linear combination of the mean
and variance of a path, or the mean and standard deviation of a path,
respectively.  We defer the reader to this earlier work for the motivation and
criticisms of these models of risk-aversion in network settings. 

In an effort to decouple the effect of risk attitudes from the effect of
selfish behavior on the degradation of system performance, 
\citet{Nikolova-Stier15} defined the concept of {\em price of risk aversion
($\prav$)} as the worst-case ratio of the cost of a risk-averse equilibrium to that
of a risk-neutral equilibrium (namely, the equilibria when agents are
risk-averse and risk-neutral, respectively).  The main result in their
paper was an upper bound on the price of risk aversion in general graphs that
is linear in the number of vertices of the graph.  Specifically, the bound was
shown to be $1+\kappa\gamma n/2$, where $\kappa$ is the worst-case
variance-to-mean ratio of an edge at equilibrium, $\gamma$ is the coefficient
of risk-aversion and $n$ is the number of vertices in the graph. 

\paragraph{Our contribution}

In this follow-up paper to \citet{Nikolova-Stier15}, we
provide a tight lower bound to the price of risk aversion, matching $1+\kappa\gamma
n/2$, through the construction of a graph family with number of
vertices that are powers of two
(Theorem~\ref{cor:altertight}).  Whereas the upper bound on the price of
risk-aversion was established for general graphs only in the mean-variance
model and was left open for the mean-stdev model, the lower bound provided here
applies to both models.  The upper bound was based on establishing the existence of
an alternating $s$-$t$-path consisting of forward and backward edges for which equilibria satisfy a certain property,
and seeing that the alternating path can have at most $n/2$ alternations.
Constructing the worst-case family presented in this paper involves finding an
instance in which the alternating path goes through every vertex in the graph and
alternates between forward and backward edges at every internal vertex. We
achieve this by inductively defining a graph family with appropriate 
mean and variance functions for each edge, using the topology of the Braess graph.

A key feature of these upper and lower bounds to the price of
risk-aversion is that they are independent of latency functions, though highly
dependent on the topology of the graph.  We call these results {\em
topological}, as opposed to the {\em functional} nature of existing price of
anarchy results, which quantify the price of anarchy in terms of the class of
allowed latency functions and which provide bounds independent of the network
topology~\cite{roughgarden-priceanarchy,css-capsoue}. For example, the famous
price of anarchy bound of $4/3$ holds for linear latency functions
and arbitrary graph topologies,
and it is {\em unbounded} for arbitrary latency functions.  In contrast, the
price of risk aversion upper bound of \citet{Nikolova-Stier15}, as well as our
lower bounds, depend on the network topology and are independent of the latency
functions. 

Our second contribution bridges this topological vs.\ functional view of
equilibrium inefficiency, by developing an asymptotically-tight {\em
functional} bound for the mean-variance model. This new bound depends on the
class of allowed latency functions and is independent of the network topology,
as with the classic price of anarchy results. In particular, using a
variational inequality characterization of equilibria proposed by
\citet{css-capsoue}, we show that the price of risk aversion is upper bounded
by $(1+\kappa\gamma)(1-\mu)^{-1}$ for $(1,\mu)$-smooth latency functions
(Theorem~\ref{thm:functional-upper}).  This implies, for example, that the
price of risk aversion is at most $(1+\kappa\gamma)4/3$ for linear latency
functions.  The upper bound can be thought as a generalization of the classic
result of~\cite{roughgarden-priceanarchy} that established that the price of
anarchy equals $(1-\mu)^{-1}$ because when there is no variability $\kappa=0$.
Furthermore, we show that our bound is asymptotically tight by providing a
matching lower bound (Theorem~\ref{thm:functional-lower}).  We note that the
upper bound was proved using different techniques and in a slightly different
context, by \citet{meir-wronggame}; we provide more details in the related work
section below.  Finally, we remark that for unrestricted functions, the
functional upper bounds become vacuous since $\mu=1$, which provides further
support for the topological analysis of Section~\ref{sec:lb}.  

Finally, our third contribution provides a tight bound on the price of risk
aversion under the mean-stdev model for a family of graphs that generalizes
series-parallel graphs (specifically, the family of graphs where
the domino-with-ears graph, shown in Fig.~\ref{fig:forbiddenGraphs}(b),
is a forbidden minor).
The mean-stdev model is significantly more difficult to analyze than the
mean-var model due to its non-additive nature, namely the mean plus standard
deviation of a path cannot be decomposed as a sum of costs over the edges in
the path. \citet{Nikolova-Stier15} proved that the
price of risk aversion in this model is $1+\gamma\kappa$ for series-parallel
graphs, or equivalently the family of instances where the Braess graph is a forbidden minor.
The proof is based on establishing that
this family of graphs admits an alternating path with zero alternations, namely
an alternating path with forward edges only.  En route to establishing a more
general bound, in this paper we extend the analysis to a larger family of
graphs that generalizes series-parallel graphs, and show that this family admits
alternating paths with one alternation and thus has a price of risk aversion of
$1+2\kappa\gamma$.  We remark that this bound applies to the mean-var model
as well, and it refines our understanding on the topology of graphs for which
the price of risk aversion is $1+2\kappa\gamma$, as opposed to the cruder
bound in terms of number of vertices only. 

An intriguing conjecture that we leave open is to show a bound for the
mean-stdev model for general graphs that is equivalent to the corresponding
bound for the mean-var model, namely that the price of risk aversion is at most
$1+\gamma\kappa\eta$, where $\eta$ is the number of forward subpaths in an
alternating path and $\kappa$ is the maximum coefficient of variation of edge
latencies at the equilibrium flow.  Intuitively, the mean plus standard
deviation along a path is upper bounded by the corresponding mean plus
variance.  It is appealing to think that, as a result, the cost of the
mean-stdev risk-averse equilibrium should be upper bounded by that of the
mean-var risk-averse equilibrium.  A corollary of such a bound, combined with our
first contribution here, would be an asymptotically-tight bound on the price of
risk aversion equal to $1+\gamma\kappa n/2$ for the mean-stdev model in general
graphs. 

\paragraph{Related Work}

The most closely related work to ours is that of \citet{Nikolova-Stier15}, who
define the concept of price of risk aversion as the worst-case ratio of the
cost of a risk-averse equilibrium to that of a risk-neutral equilibrium.  For
users that seek to minimize the mean plus $\gamma$ times the variance of a
path, where $\gamma$ is a given constant that parameterizes the degree of
risk-aversion, they show that the price of risk-aversion in general networks is
upper bounded by $1+\kappa\gamma\eta$, where $\eta$ is a parameter that depends
on the graph topology and $\kappa$ is the maximum variance-to-mean ratio. The
remarkable feature of that bound is that it is independent of the mean and
variance latency functions.  The proof is based on establishing the existence
of an alternating $s$-$t$ path of forward and backward edges, in which the
forward edges carry more risk-neutral flow and the backward edges carry more
risk-averse flow.  A question left open by them is whether this bound is tight,
which is what we prove here.  In addition, for both the mean-variance and the
mean-stdev model, Nikolova and Stier-Moses proved a tight bound of the price of
risk-aversion of $1+\kappa\gamma$ for series-parallel graphs.  Here, we extend
this characterization to a tight bound of $1+2\kappa\gamma$ for a wider family
of graphs, namely those where the domino-with-ears graph is a forbidden minor.
Finally, in contrast to this earlier work, which only provides results
depending on the graph topology, here we additionally offer functional bounds
that are independent of the network topology and instead are parametrized by
the class of allowed latency functions.  

The asymptotically-tight functional bounds we present here were inspired by the
recent work of \citet{meir-wronggame}.  In their paper, they prove a result
that compares an equilibrium when players consider a modified cost function to
the social optimum of the original game. As a corollary, they indirectly derive
an upper bound on the price of risk aversion of $(1+\kappa\gamma)(1-\mu)^{-1}$
when cost functions are $(1,\mu)$ smooth.  As we establish in this paper, this
upper bound and that of \citet{Nikolova-Stier15} are of a different type, i.e.,
functional vs.\ topological, which is why they cannot be compared directly.
Our proof of the upper bound relies on a simpler approach that is a
straightforward generalization of the earlier price of anarchy proof based on
variational inequalities put forward by~\cite{css-congestion}. Consequently,
the method allows for an easier comparison and consistency with the traditional
price of anarchy proofs.  We also provide an asymptotically-matching functional
lower bound, which follows from the same graph construction as our topological
lower bound. 

Finally, we mention again that this paper is part of a relatively new and
growing literature exploring the effect of risk aversion on network equilibria
in routing
games~\cite{os-rweTS,NikolovaM11,nie-perceq,AngelidakisFL13,PiliourasNS:2013,nikolova-stochWE,Nikolova-Stier15,cominetti-riskaverserouting}.
We refer the reader to the recent paper by Nikolova and
Stier-Moses~\cite{Nikolova-Stier15} for a more comprehensive review of
additional related work, as well as a detailed discussion on the pros and cons
of the risk-averse models considered here.  We also refer the reader to the
recent survey by \citet{Cominetti15-routingUncert} for a more extensive
review of equilibrium routing under uncertainty.


\section{The Model}\label{sec:model}

We consider a directed graph $G=(\GV,\GA)$ with a single source-sink pair
$(s,t)$ and an aggregate demand of $d$ units of flow that need to be routed
from $s$ to $t$. We let $\paths$ be the set of all feasible paths between
$s$ and $t$. We encode {the decisions of the symmetric} players as a flow vector $\flow
=(\flow_{\route})_{\route\in\paths}\in \R^{|\paths|}_+$ over all paths. Such
a flow is feasible when demand is satisfied, as given by the constraint
$\sum_{\route\in\paths} \flow_{\route}=d$. For notational simplicity, we
denote the flow on a directed edge $e$ by $\flow_e = \sum_{\route \ni e}
\flow_{\route}$. When we need multiple flow variables, we use the analogous
notation $\xvec, x_{\route}, x_e$ and $\zvec, z_{\route}, z_e$.

The network is subject to congestion, modeled with stochastic delay functions
$\ell_e(\flow_e) +\xi_e(\flow_e)$ for each edge $e\in \GA$. Here, {the
deterministic function} $\ell_e(\flow_e)$ measures the expected delay when the
edge has flow $\flow_e$, and $\xi_e(\flow_e)$ is a random variable that
represents a noise term on the delay, encoding the error that $\ell_e(\cdot)$
makes. Functions $\ell_e(\cdot)$, generally referred to as {\em latency
functions}, are assumed continuous and non-decreasing. The expected latency
along a path $p$ is given by $\ell_{\route}(f):=\sum_{e\in \route}
\ell_e(\flow_e)$.

Random variables $\xi_e(\flow_e)$ have expectation equal to zero and standard
deviation equal to $\std_e(\flow_e)$, for arbitrary continuous functions
$\std_e(\cdot)$. For the variational inequality characterization used in
Section~\ref{sec:funct}, we further assume that standard deviation functions
are non-decreasing.
%
%
We assume that these random variables are pairwise independent. From there, the
variance along a path equals $\var_{\route}(f)=\sum_{e\in \route}
\std_e^2(\flow_e)$, and the standard deviation (stdev) is
$\sigma_{\route}(f)=(\var_{\route}(f))^{1/2}$.  We will initially work with
variances and then extend the model to standard deviations, which have the
complicating square roots. (For details on the complications, we refer the
reader to \cite{nikolova-stochWE,Nikolova-Stier15}). 

We will consider the {\em nonatomic} version of the routing game where
infinitely many players control an infinitesimal amount of flow each so that
the path choice of a single player does not unilaterally affect the costs
experienced by other players (even though the joint actions of players affect
other players).

Players are risk-averse and choose paths taking into account the variability of
delays by considering a {\em mean-var} objective
$\pathcost^{\gamma}_{\route}(\flow) = \ell_{\route}(\flow)+\gamma
\var_{\route}(\flow)$.  We refer to this objective simply as the {\em path
cost} (as opposed to latency). Here, $\gamma\ge 0$ is a constant that
quantifies the risk-aversion of the players, which we assume homogeneous. The
special case of $\gamma = 0$ corresponds to risk-neutrality.

The variability of delays is usually not too large with respect to the
expected latency. It is common to consider the {\em coefficient of
{variation}} $CV_e(\flow_e):=\std_e(\flow_e)/\ell{_e}(\flow_e)$ given by the
ratio of the standard deviation to the expectation as a relative measure of
variability {\cite{McAuliffe-CV}}. In this case, we consider the {\em variance-to-mean ratio}
$\var_e(\flow_e)/\ell{_e}(\flow_e)$ as a relative measure of variability.
Consequently, we assume that $\var_e(\xx_e)/\ell_e(\xx_e)$ is bounded from
above by a fixed constant $\kappavar$ for all $e\in \GA$ at the equilibrium
flow {of interest} $\xx_e\in \R_+$,
{which is less restrictive than requiring such a bound for all feasible flows}.
This means that the variance cannot be larger than
$\kappavar$ times the expected latency in any edge at the equilibrium flow.

%
%

{In summary, an instance of the problem is given by the tuple}
$(G,d,\ell,\var,\gamma)$, which represents the topology, the demand, the
latency functions, the variability functions, and the degree of player risk-aversion.



The following definition captures that at equilibrium players route flow
along paths with minimum cost $\pathcost^{\gamma}_{\route}(\cdot)$. In
essence, users will switch routes until at equilibrium costs are equal along
all used paths.  This is the natural extension of the traditional Wardrop
Equilibrium to risk-averse users.

\begin{definition}[Equilibrium]\label{defi:eq}
A {\em $\gamma$-equilibrium} of a stochastic nonatomic routing game is a flow
$\flow$ such that for every path $p\in\paths$
with positive flow, the path cost $\pathcost^{\gamma}_{p}(\flow) \leq
\pathcost^{\gamma}_{q}(\flow)$ for any other path $q \in \paths\,$. For a
fixed risk-aversion parameter $\gamma$, we refer to a $\gamma$-equilibrium as a {\em risk-averse
Wardrop equilibrium} (RAWE), denoted by $\xx$.
\end{definition}

Notice that since the variance decomposes as a sum over all the edges that
form the path, the previous definition represents a standard Wardrop
equilibrium with respect to modified costs $\ell_e(\flow_e)+\gamma\var_e(\flow_e)$. 
For the existence of the equilibrium, it is sufficient that the modified cost functions are increasing. 

Our goal is to investigate the effect that risk-averse players have
on the quality of equilibria. The quality of a solution that represents
collective decisions can be quantified by the cost of equilibria with respect
to expected delays since, over time, different realizations of delays average
out to the mean by the law of large numbers. For this reason, a social
planner, who is concerned about the long term, is typically risk neutral, as
opposed to users who tend to be more emotional about decisions. Furthermore,
the social planner may aim to reduce long-term emissions, which would be
better captured by the total expected delay of all users. These arguments
justify the difference between the risk aversion coefficient that
characterizes user behavior at equilibrium and the behavior of the social
planner.

\begin{definition}
The {\em social cost} of a flow $\flow$ is
defined as the sum of the expected latencies of all players:
$
\CC(\flow) := \sum_{\route \in \paths} \flow_{\route}\ell_{\route}(\flow)=
\sum_{e \in \GA} \flow_e\ell_e(\flow_e)
$\,.
\end{definition}

Although one could have measured total cost as the weighted sum of the costs
$\pathcost^{\gamma}_{\route}(\flow)$ of all users, this captures users'
utilities but not the system's benefit. 
Nikolova and Stier-Moses {\cite{nikolova-stochWE} considered} such a cost function to compute
the price of anarchy; in the
current paper, our goal is to compare across different values of risk aversion
so we want the various flow costs to be measured {with the same units}.

The next definition captures the increase in social cost at equilibrium
introduced by user risk-aversion, compared to the cost one would have if users
were risk-neutral. Hence, we use a {\em risk-neutral Wardrop equilibrium} 
(RNWE), defined as a $0$-equilibrium according to Definition~\ref{defi:eq}, as
the yardstick to determine the inefficiency caused by risk-aversion. We define
the price of risk aversion as the worst-case ratio among all possible instances
of expected costs of the risk-averse and risk-neutral equilibria.

\begin{definition}[\cite{Nikolova-Stier15}]
Considering an instance family $\mathcal{F}$ of a routing game with uncertain
delays, the {\em price of risk aversion} ($\prav$) associated with $\gamma\kappavar$
(the risk-aversion coefficient times the variance-to-mean ratio) is defined by
\begin{equation}\label{eq:PoR}
{
\prav(\mathcal{F},\gamma,\kappavar):=\sup_{G,d,\ell,\var} \left\{
\qquad\frac{\CC(\xx)}{\CC(\zz)}~:~
(G,d,\ell,\var)\in \mathcal{F}\text{ and } \var(\xx)\le\kappavar\ell(\xx)
\right\},
}
\end{equation}
where $\xx$ and $\zz$ are the RAWE and the RNWE of the
corresponding instance.
\end{definition}

This supremum depends on $\mathcal{F}$, which may be defined in terms of the
network topology (as, e.g., general, series-parallel, or Braess networks), the
number of vertices, or the set of allowed latency functions (as, e.g., affine
or quadratic polynomials). Different results will be with respect to different
families $\mathcal{F}$, with Sections~\ref{sec:lb} and~\ref{sec:stdev} focusing
on topological definitions, and Section~\ref{sec:funct} focusing on sets of
allowed functions. For the sake of brevity, we will typically write just $\prav$
and the parameters $\mathcal{F}$, $\gamma$, and $\kappa$ will be implicit by
the context. Although we do not specify it explicitly in each result for
brevity, all our results work for arbitrary values of $\gamma\ge 0$ and
$\kappa\ge 0$.

\section{Structural Lower Bounds for the Price of Risk Aversion}\label{sec:lb}


In this section, we prove two lower bounds on the price of risk aversion
$\prav$, both matching the upper bounds presented by Nikolova and Stier-Moses
\cite{Nikolova-Stier15}. The first bound for $\prav$ is with respect to the
minimum number of alternations among all alternating paths as defined below,
while the second bound is with respect to the number of vertices in the graph.
In fact, the same bounds hold in the mean-standard deviation model, but we
defer that discussion to Section~\ref{stdevbounds}.

Given an instance, we denote a RNWE flow associated with it by $z$, and a RAWE
flow by $x$. To define alternating paths, we partition the edge-set $\GA$ into
$A=\{e\in\GA :x_e<z_e\}$ and $B=\{e\in\GA :z_e\leq x_e\}$. Conceptually, an
alternating path is an $s$-$t$-path in the graph where edges in $B$ are
reversed (see Fig.~\ref{fig:forbiddenGraphs}(c) for an illustration of the
definition). 
\begin{definition}[\cite{Nikolova-Stier15}] 
A generalized path $\pi=A_1$-$B_1$-$A_2$-$B_2$-$\cdots$-$A_t$-$B_t$-$A_{t+1}$,
composed of a sequence of subpaths, is an {\em alternating path} when every edge in
$A_i\subseteq\GA$ is directed in the direction of the path, and every edge in
$B_i\subset\GA$ is directed in the opposite direction from the path.  We say that such
a path has $t+1$ disjoint forward subpaths, and $t$ {\em alternations}.  
\end{definition}
The definition of alternating paths was motivated by the following result.

\begin{theorem}[\cite{Nikolova-Stier15}]\label{thm:ub}
Considering the set of instances with arbitrary mean and variance latency
functions that admit an alternating path with up to $\eta$ disjoint forward
subpaths, $\prav\leq 1+\eta\gamma\kappa$.
\end{theorem}

The theorem implies that for the set of instances on graphs with $n$ vertices,
$\prav\leq 1+\gamma\kappa \lceil \frac{n-1}{2} \rceil$ since in that case an
alternating path cannot have more than $\lceil (n-1)/2 \rceil$ disjoint forward
subpaths.
We are going to prove that those upper bounds are tight. To get there, we first
prove a more general result that shows how instances with high price of risk
aversion can be constructed.

\begin{theorem}\label{tightexample}
For an $i\in \N_{>0}$, consider $r^i_A,r^i_N\in \R_{\ge 0}$ such that $2^i r^i_A > (2^i-1) r^i_N$. 
There exists an instance based on a graph $G^i(r^i_A,r^i_N)$ that satisfies the following two properties.
\begin{itemize}
\item If $r^i_A$ risk-averse players are routed through $G^i(r^i_A,r^i_N)$,
then the path cost along used paths at the RAWE flow $x$, as well as
the expected latency, is $1+2^i\gamma\kappa$. The social cost is
$\CC(x)=(1+2^i\gamma\kappa)r^i_A$.
\item If $r^i_N$ risk-neutral players are routed through $G^i(r^i_A,r^i_N)$,
then the expected latency along each used path at the RNWE flow $z$
is $1$. The social cost is $\CC(z)=r^i_N$.
\end{itemize}
\end{theorem}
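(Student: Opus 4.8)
The theorem asks me to construct, for each $i \in \N_{>0}$, a graph $G^i(r^i_A, r^i_N)$ with mean and variance latency functions such that routing $r^i_A$ risk-averse players yields a RAWE whose used-path cost and expected latency equal $1 + 2^i\gamma\kappa$, while routing $r^i_N$ risk-neutral players yields a RNWE whose used-path expected latency equals $1$. The phrase "inductively generated from the Braess graph" in the introduction, together with the factor $2^i$ (as opposed to $\eta$ or $n/2$), is the key signal: the plan is to build $G^i$ recursively, doubling the number of alternations at each level so that an alternating path passes through every internal vertex and alternates at each one. The hard part is choosing the latency and variance functions so that both equilibria behave exactly as claimed simultaneously, and so that the inductive step composes cleanly.

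**The inductive construction.**

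I would define $G^1$ to be the classical Braess graph: source $s$, sink $t$, two intermediate vertices, with the usual two "outer" edges, two "inner" edges, and one "crossing" edge. I would then assign affine mean functions $\ell_e(\cdot)$ and variance functions $\var_e(\cdot)$ calibrated so that with demand $r^1_N$ the risk-neutral flow uses the zig-zag (Braess) path and achieves expected latency $1$, while with demand $r^1_A$ the risk-averse flow, seeing the extra $\gamma\var$ penalty on the congested edges, abandons the crossing edge and routes on the two outer paths, achieving cost $1 + 2\gamma\kappa$. For the inductive step, given $G^{i-1}(\cdot,\cdot)$ satisfying the two properties, I would form $G^i$ by taking a fresh Braess-like gadget and substituting copies of $G^{i-1}$ (with appropriately scaled demand parameters $r^{i-1}_A, r^{i-1}_N$) in place of two of its edges, so that each alternation of the inner instance is wrapped by one more alternation of the outer gadget. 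The condition $2^i r^i_A > (2^i-1) r^i_N$ is exactly what is needed to guarantee that the risk-averse demand is large enough to force the desired routing at every level; I would verify that this inequality at level $i$ reduces to the analogous inequality at level $i-1$ for the subinstance demands, so the induction goes through.

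**Verifying the two equilibrium conditions.**

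With the construction fixed, I would check each bullet separately. For the RNWE, I would show that the risk-neutral equilibrium on $G^i$ routes all $r^i_N$ flow along paths whose expected latency is exactly $1$; by the inductive hypothesis the embedded copies of $G^{i-1}$ contribute expected latency $1$ along their used paths, and the calibration of the new outer gadget preserves this, giving the social cost $\CC(z) = r^i_N$. For the RAWE, I would verify that each player, minimizing $\ell_p + \gamma\var_p$, experiences cost $1 + 2^i\gamma\kappa$: the outer gadget contributes one alternation worth $2\gamma\kappa$ of extra variance penalty (doubling the $2^{i-1}$ from the inner instance to $2^i$), and I would confirm the mean-variance ratio $\var_e(x_e)/\ell_e(x_e) \le \kappa$ holds at the risk-averse flow on every edge, so the instance is admissible in the definition of $\prav$. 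Since the expected latency along used RAWE paths also equals $1 + 2^i\gamma\kappa$ (the construction will arrange that the variance-heavy edges are precisely the ones being avoided, so along used paths mean equals cost), the social cost is $\CC(x) = (1 + 2^i\gamma\kappa) r^i_A$.

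**Anticipated main obstacle.**

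The delicate point, and where I expect to spend the most effort, is the simultaneous calibration of the functions in the inductive step: I must choose the scaling of demands and the affine coefficients in the outer gadget so that (a) the \emph{same} instance supports both equilibria with their stated costs, (b) the partition of edges into $A$ (carrying more risk-averse flow) and $B$ (carrying more risk-neutral flow) aligns with the intended alternating path through every vertex, and (c) the variance-to-mean bound $\kappa$ is met with equality on the critical edges so no slack is lost and the $2^i$ factor is exactly achieved rather than merely bounded. Getting the substitution to preserve all three invariants across the recursion — rather than just one equilibrium at a time — is the crux; once the invariants are stated precisely, the per-level verification should reduce to routine checks against the inductive hypothesis.
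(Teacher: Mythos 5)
Your high-level skeleton matches the paper's proof: induction on $i$, building $G^i$ as a Braess gadget in which two of the edges are replaced by copies of $G^{i-1}$ with rescaled demands, and calibrating congestion-dependent mean latency functions so that both equilibria have the stated costs. However, you route the two equilibria on the \emph{wrong} sets of paths, and this reversal is fatal rather than cosmetic. In the paper's construction the \emph{risk-averse} flow uses the zig-zag (crossing) path and the \emph{risk-neutral} flow splits half-and-half over the two outer paths; the variance $\kappa$ sits on the edges used \emph{only} by the risk-neutral flow, which deters risk-averse players and funnels them onto the congestible edges $a_j$, whose mean latency then rises (by congestion, not by variance) from $0$ at the risk-neutral flow level to a positive value at the strictly larger risk-averse flow level. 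You propose the opposite: risk-neutral players on the zig-zag, risk-averse players spreading onto the outer paths.

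Here is why your orientation cannot satisfy the statement. Write $m_1,m_2$ for the means of the two shared edges (upper-left, lower-right), $m_c,v_c$ for the mean and variance of the crossing edge, and $m_{ur},m_{ll}$ for the outer-only edges. Since the theorem requires the path cost \emph{and} the expected latency of RAWE-used paths to both equal $1+2\gamma\kappa$, every edge on a RAWE-used path must have zero variance at the RAWE flow (for $\gamma>0$); so in your construction the only deterrent keeping risk-averse players off the zig-zag is $\gamma v_c(0)$, and admissibility ($v_e(x_e)\le\kappa\,\ell_e(x_e)$ at the RAWE $x$) caps it at $\gamma\kappa\, m_c(0)$. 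Now take $r^1_A=r^1_N=d$, which satisfies your hypothesis $2r^1_A>r^1_N$ and is exactly the case invoked later in Corollary~\ref{cor:verticestight} and Theorem~\ref{cor:altertight}. Your RNWE puts $d$ on each zig-zag edge, so $m_1(d)+m_c(d)+m_2(d)=1$, while your RAWE puts at most $d$ on each shared edge and $0$ on the crossing edge. By monotonicity, the zig-zag's mean-var cost at your RAWE is at most $m_1(d)+m_2(d)+m_c(d)(1+\gamma\kappa)=1+\gamma\kappa\,m_c(d)\le 1+\gamma\kappa$, which is strictly below the cost $1+2\gamma\kappa$ of your RAWE-used outer paths whenever $\gamma\kappa>0$: risk-averse players would deviate to the zig-zag, so your claimed RAWE is not an equilibrium. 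The same flow-direction problem breaks your calibration plan: with risk-averse flow spread out, the congestible edges carry \emph{less} flow at the RAWE than at the RNWE, so no nondecreasing mean function can be $0$ at the risk-neutral flow and positive at the risk-averse flow; this is also why the hypothesis $2^i r^i_A>(2^i-1)r^i_N$ points in the direction it does (it guarantees the RAWE flow on the $a_j$ edges \emph{exceeds} the RNWE flow there). The remedy is precisely the paper's orientation: place variance $\kappa$ on the outer-only edges and concentrate the risk-averse flow on the zig-zag paths.
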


The proof is by induction on $i$. We will recursively construct the instance
for $i$ by forming a Braess instance with the graph resulting for the $i-1$
case. At each step we will need to find a mean latency function that makes the
properties in the statement work.

\begin{proof}
For the base case $i=1$, we let
$G^1(r^1_A,r^1_N)$ be the Braess graph, shown in Fig.~\ref{basecase}.
Indeed, consider any $r^1_A,r^1_N$ such that $r^1_A >r^1_N/2$ as indicated
in the statement of the result. We define the mean latency function $a_1(x)$ to be 
any function that is strictly increasing for $x\geq r^1_N/2$ and
such that $a_1(r^1_N/2)=0$ and $a_1(r^1_A)=\gamma\kappa$. Note that
in order for $a_1(x)$ to be strictly increasing, it is necessary that
$r^1_A>r^1_N/2$, which holds by hypothesis.

\begin{figure}[t]
\centering
\includegraphics[scale=0.56]{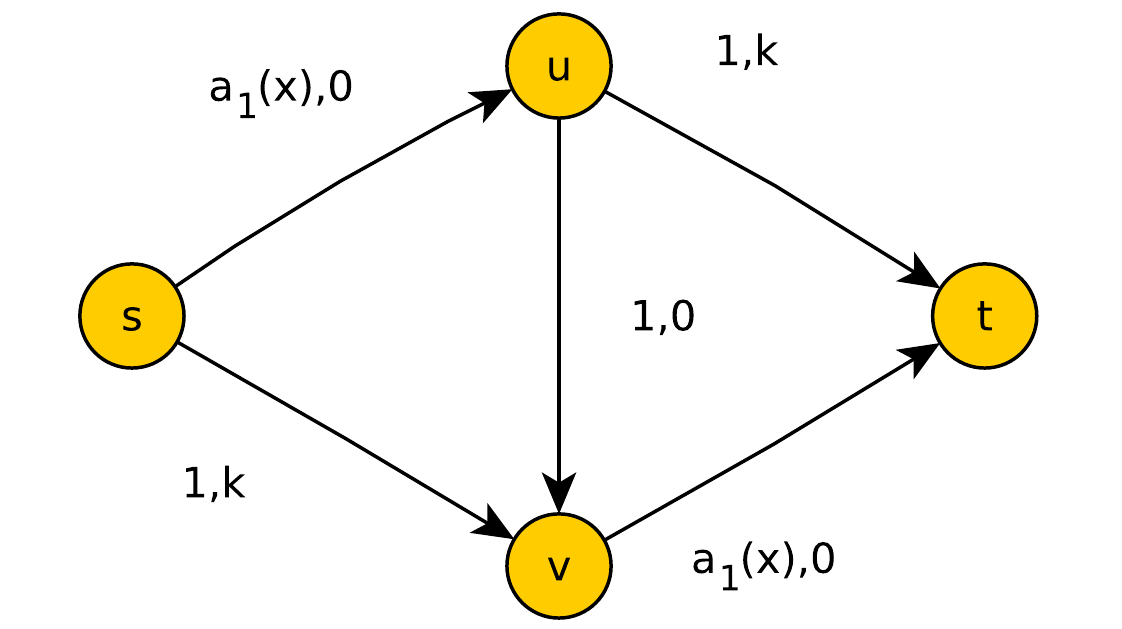}
\caption{The base case $G^1(r^1_A,r^1_N)$ is a Braess graph.}
\label{basecase}
\end{figure}

The RAWE flow $x$ routes the $r^1_A$ risk-averse players along the zig-zag
path. Hence, the mean-var objective in the upper-left and the lower-right
edges, as well as the mean latency, will each be $\gamma\kappa$, totalling
$1+2\gamma\kappa$ for each player in both cases. Hence, $\CC(x)=(1+2^i\gamma\kappa)r^i_A$.
Instead, the RNWE flow $z$ routes the $r^1_N$ risk-neutral players along the
top and bottom paths, half and half.  Hence, the cost for each player is~$1$
and $\CC(z)=r^i_N$, proving the base case.

Let us consider the inductive step where we assume we have an instance
satisfying the properties for $i-1$ and construct the instance for step $i$.
Starting from $r^i_A$ and $r^i_N$ satisfying the condition in the statement for
case $i$, we set
$r^{i-1}_A=(2^i r^i_A-r^i_N)/2^{i+1}$ and
$r^{i-1}_N=r^i_N/2$. We first verify that these values satisfy the hypothesis
for the case $i-1$. Indeed,
$r^{i-1}_A
>\frac{2^{i-1}-1}{2^{i-1}}r^{i-1}_N$ because by hypothesis $r^i_A
>\frac{2^i-1}{2^i}r^i_N\Leftrightarrow \frac{r^i_A}{2}
>\frac{2^i-1}{2^{i+1}}r^i_N$ which implies that
$\frac{r^i_A}{2}-\frac{r^i_N}{2^{i+1}}
>\frac{2^i-1}{2^{i+1}}r^i_N-\frac{1}{2^{i+1}}r^i_N
=\frac{2^{i-1}-1}{2^{i-1}}\frac{r^{i}_N}{2}$. 

Using the graph corresponding to step $i-1$ and the values of $r^{i-1}_A$ and
$r^{i-1}_N$ specified previously, we construct graph $G^i(r^{i}_A,r^{i}_N)$
with those components as shown in Fig.~\ref{inductivestep}.
We define the mean latency function $a_i(x)$ to be any function that is strictly
increasing for $x\geq r^i_N/2$ and such that $a_1(r^i_N/2)=0$ and
$a_i(\frac{r^i_A}{2}+\frac{r^i_N}{2^{i+1}})=2^{i-1}\gamma\kappa$.  Note that in order
for $a_i(x)$ to be strictly increasing, it is necessary that
$\frac{r^i_A}{2}+\frac{r^i_N}{2^{i+1}}>\frac{r^i_N}{2}$, which actually holds
because, by hypothesis, $r^i_A >\frac{2^i-1}{2^i}r^i_N\Leftrightarrow
\frac{r^i_A}{2} >\frac{2^i-1}{2^{i+1}}r^i_N$.

\begin{figure}[t]
\center
\includegraphics[scale=0.85]{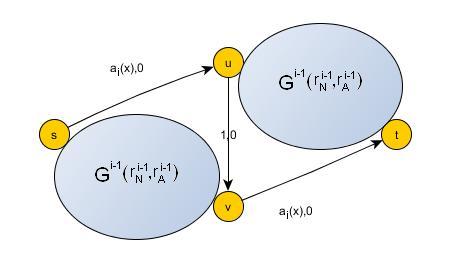}\\
\caption{The recursive construction of $G^i(r^i_A,r^i_N)$ by forming a Braess graph topology using components of the earlier step.}
\label{inductivestep}
\end{figure}

The RAWE flow $x$ routes the $r^i_A$ risk-averse players as follows:
$\frac{r^i_A}{2}-\frac{r^i_N}{2^{i+1}}$ units along the upper path,
$\frac{r^i_N}{2^{i}}$ units along the zig-zag path, and
$\frac{r^i_A}{2}-\frac{r^i_N}{2^{i+1}}$ units along the lower path. The
mean-var objective of the upper-left and the lower-right edges, as well as the
mean latency, will each be $2^{i-1}\gamma\kappa$ since the flow through them is
equal to $\frac{r^i_A}{2}+\frac{r^i_N}{2^{i+1}}$. The flow inside each of the
copies of $G^{i-1}(r^{i-1}_A,r^{i-1}_N)$ is a RAWE for which we know, by
induction, that all players perceive a path cost of $1+2^{i-1}\gamma\kappa$, which
additionally, by induction, is the mean latency of all used paths. Thus, the
path cost that players perceive in $G^i(r^i_A,r^i_N)$ under the RAWE flow $x$
is $1+2^i\gamma\kappa$, which additionally is the mean latency of all used
paths, and the social cost is $\CC(x)=(1+2^i\gamma\kappa)r^i_A$.

The RNWE flow $z$ routes the $r^i_N$ risk-neutral players along the top and
bottom paths, half and half.  Hence, the path cost for perceived by each player
is~$1$, as the mean-var objective in the upper-left and lower-right edges is
equal to 0, and, by induction, passing through either of both copies of
$G^{i-1}(r^{i-1}_A,r^{i-1}_N)$ has a mean-var objective of 1. This implies that
$\CC(z)=r^i_N$, which completes the proof.
\qed
\end{proof}

The previous result provides a constructive way to generate instances with high
price of risk aversion.  We show the concrete topology for the cases $i=2$ and
$i=3$ in Fig.~\ref{example23} below.  Notice that the paths of the instances
resulting from these constructions have at most one edge with non-zero
variance. This fact is useful to extend our lower bounds to the mean-stdev
model, since in that case summing and taking square roots is not needed.

Another useful observation is that the prevailing value for mean latency functions
$a_j$ under the RNWE flow $z$ is $0$, and under the RAWE flow $x$ is $2^{j-1}$.
This can be easily proved by induction and will be used when establishing functional
lower bounds on the $\prav$ in the next section. 

We now use the previous result to get lower bounds for $\prav$ matching the upper bound specified earlier.

\begin{corollary}\label{cor:verticestight}
For any $n_0\in\N$, there is an instance on a graph $G$ with $n\geq n_0$ vertices such that its equilibria satisfy
$\CC(\xx)\ge 1 +\gamma\kappa \lceil(n-1)/2\rceil \CC(\zz)$. 
\end{corollary}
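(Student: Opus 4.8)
The plan is to derive the corollary from Theorem~\ref{tightexample} by fixing a single demand and then re-expressing the factor $2^i$ in the cost ratio through the number of vertices of $G^i$. The one point requiring care is that the definition of $\prav$ compares the RAWE $x$ and the RNWE $z$ of the \emph{same} instance, hence at a common demand $d$, whereas the two items of Theorem~\ref{tightexample} route $r^i_A$ risk-averse and $r^i_N$ risk-neutral players separately. I would reconcile this by setting $r^i_A=r^i_N=:d$ for an arbitrary $d>0$. This satisfies the hypothesis $2^i r^i_A>(2^i-1)r^i_N$ automatically, since it reduces to $2^i>2^i-1$. Theorem~\ref{tightexample} then produces an instance on $G^i(d,d)$ for which the RAWE and the RNWE both route demand $d$ and satisfy $\CC(x)=(1+2^i\gamma\kappa)d$ and $\CC(z)=d$, so that $\CC(x)=(1+2^i\gamma\kappa)\,\CC(z)$; by construction the instance respects the variance-to-mean bound implicit in $\kappa$, making it admissible for $\prav$.

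Next I would count the vertices of $G^i$, call this number $n_i$. The base case is the Braess graph, with $n_1=4$. In the inductive step, $G^i$ is assembled by placing two copies of $G^{i-1}$ inside an outer Braess gadget on four vertices $s,u,w,t$: one copy has its source and sink identified with $u$ and $t$, the other with $s$ and $w$, while the edges $s\to u$, $w\to t$ and the connector $u\to w$ realize the zig-zag. Since each embedded copy contributes only its $n_{i-1}-2$ internal vertices (both of its terminals being identified with gadget vertices), the count obeys the recurrence $n_i=4+2(n_{i-1}-2)=2\,n_{i-1}$, and therefore $n_i=2^{i+1}$, consistent with the powers-of-two restriction announced in the introduction.

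Finally I would substitute the vertex count into the cost identity. Because $2^{i+1}-1$ is odd, $\lceil(n_i-1)/2\rceil=\lceil(2^{i+1}-1)/2\rceil=2^i$, so $\CC(x)=(1+2^i\gamma\kappa)\,\CC(z)$ becomes $\CC(x)=\bigl(1+\gamma\kappa\lceil(n_i-1)/2\rceil\bigr)\CC(z)$, exactly matching the upper bound implied by Theorem~\ref{thm:ub}. Given any $n_0\in\N$, choosing $i$ with $2^{i+1}\ge n_0$ yields a graph on $n=2^{i+1}\ge n_0$ vertices that attains the claimed inequality, in fact with equality. I expect the main obstacle to be conceptual rather than computational: spotting that the single-demand requirement of $\prav$ forces $r^i_A=r^i_N$ (so that the ratio collapses to the clean value $1+2^i\gamma\kappa$ rather than an unbounded $\,(1+2^i\gamma\kappa)r^i_A/r^i_N$), and then justifying the recurrence $n_i=2\,n_{i-1}$ by tracking precisely which terminals of the embedded copies are glued to the outer gadget.
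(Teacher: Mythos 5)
Your proposal is correct and follows essentially the same route as the paper: apply Theorem~\ref{tightexample} with $r^i_A=r^i_N=d$, note that $G^i$ has $2^{i+1}$ vertices by construction, and translate the factor $2^i$ into $\gamma\kappa\lceil(n-1)/2\rceil$ for $n$ a power of two at least $n_0$. Your write-up actually supplies details the paper leaves implicit (the verification of the hypothesis $2^i>2^i-1$, the vertex-count recurrence $n_i=4+2(n_{i-1}-2)=2n_{i-1}$, and the ceiling arithmetic), so no gap remains.
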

\begin{proof}
Consider an arbitrary demand $d$, and apply Theorem~\ref{tightexample} with
$r^i_A=r^i_N=d$ and $i=\min\{j\in\N:n_0\leq 2^j\}$ to get instance
$G^i(r^i_A,r^i_N)$. Consequently, the RAWE flow $x$ and the RNWE flow $z$
satisfy that
\[\frac{\CC(x)}{\CC(z)}=\frac{(1+2^i\gamma\kappa)d}{d}=
1+\gamma\kappa \frac{n}{2}
\,,\]
because $G^i(r_A^i,r_N^i)$ has $2^{i+1}$ vertices by construction.
Finally, the result holds because $n$ is a power of two.
\qed
\end{proof}

The previous lower bound together with the upper bound given in the paragraph
after Theorem~\ref{thm:ub} imply that the $\prav$ with respect to the set of
instances on graphs with up to $n$ vertices is {\em exactly} equal to
$1+\gamma\kappa \lceil(n-1)/2\rceil$ when $n$ is a power of 2. From there, the
bound is tight infinitely often.  Although for other values of $n$ the bounds
are not tight, they are close together so these results provide an
understanding of the asymptotic growth of the $\prav$. We now refine this
observation to the bound in Theorem~\ref{thm:ub}.

\begin{theorem}\label{cor:altertight}
The upper bound for the price of risk-aversion shown in Theorem~\ref{thm:ub}
and the lower bound shown in Corollary~\ref{cor:verticestight} coincide for
graphs of size that is a power of~2. Otherwise, the gap between them is less
than~2.
\end{theorem}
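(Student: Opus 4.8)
The plan is to directly compare the two closed-form expressions. By Theorem~\ref{thm:ub} and the remark following it, every instance on a graph with at most $n$ vertices satisfies $\prav\le 1+\gamma\kappa\lceil(n-1)/2\rceil$, since no alternating path can have more than $\lceil(n-1)/2\rceil$ disjoint forward subpaths. On the lower-bound side, Corollary~\ref{cor:verticestight} shows that the construction $G^i(r^i_A,r^i_N)$, which has $2^{i+1}$ vertices, realizes a ratio of exactly $1+\gamma\kappa\,2^i$. The proof therefore amounts to comparing the coefficient $\lceil(n-1)/2\rceil$ in the upper bound against the coefficient $2^i$ realized by the best available construction.

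First I would treat the power-of-two case. If $n=2^{i+1}$, then $\lceil(n-1)/2\rceil=\lceil 2^i-1/2\rceil=2^i=n/2$, so the upper bound equals $1+\gamma\kappa\,n/2$; meanwhile Corollary~\ref{cor:verticestight} produces an instance on exactly $n=2^{i+1}$ vertices with ratio $1+\gamma\kappa\,n/2$. Hence the two bounds coincide, establishing the first assertion.

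For the general case, I would let $2^{i+1}$ be the largest power of two with $2^{i+1}\le n$, so that $2^{i+1}\le n<2^{i+2}$. The construction on $2^{i+1}\le n$ vertices is a legitimate member of the family of graphs with at most $n$ vertices, and it yields the lower bound $1+\gamma\kappa\,2^i$. Interpreting the ``gap'' as the multiplicative ratio of the upper to the lower bound, it remains to show
\[
\frac{1+\gamma\kappa\lceil(n-1)/2\rceil}{1+\gamma\kappa\,2^i}<2.
\]
The key estimate is that $n<2^{i+2}$ forces $\lceil(n-1)/2\rceil\le 2^{i+1}-1<2\cdot 2^i$. With this, the displayed inequality is equivalent to $\gamma\kappa\bigl(\lceil(n-1)/2\rceil-2\cdot 2^i\bigr)<1$, whose left-hand side is at most $0$ because the parenthesized term is a negative integer and $\gamma\kappa\ge 0$; this yields the claimed bound, and strictly.

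The argument is essentially arithmetic, so there is no deep obstacle; the main care needed is in fixing the right reading of ``gap'' as the multiplicative ratio rather than an additive difference of coefficients (the latter can exceed $2$, e.g.\ already for $n=15$), and in confirming that the construction on the largest admissible power of two remains a valid instance in the family of graphs with at most $n$ vertices. Handling the ceiling function at the boundary values $n=2^{i+1}$ and $n=2^{i+2}-1$ is the only place where one must be slightly attentive.
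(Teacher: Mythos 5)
Your proof is correct, and its treatment of the non-power-of-two case is essentially the paper's: both take the largest power of two at most $n$ and bound the \emph{multiplicative} ratio of the two bounds by $2$ (your reading of ``gap'' as a ratio matches the paper, which calls it the relative gap). Your version is in fact cleaner there: the paper's displayed denominator $1+\gamma\kappa\,2^{\lfloor \log_2 n\rfloor}$ appears to have an off-by-one in the exponent (the construction on $2^{\lfloor\log_2 n\rfloor}$ vertices realizes $1+\gamma\kappa\,2^{\lfloor\log_2 n\rfloor-1}$), whereas your estimate $\lceil(n-1)/2\rceil\le 2^{i+1}-1<2\cdot 2^i$ yields the inequality correctly from the honest lower bound. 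Where you genuinely diverge is the power-of-two case: you settle it by arithmetic on the closed forms, combining the vertex-count version of the upper bound (the remark after Theorem~\ref{thm:ub}) with the realized ratio from Corollary~\ref{cor:verticestight}. The paper instead spends the bulk of its proof inside the instance: it exhibits the alternating path of $G^i$ recursively (lower component, then the reversed vertical edge, then the upper component) and checks that it covers all $2\eta=2^{i+1}$ vertices, has exactly $\eta=2^i$ disjoint forward subpaths, and that its forward and backward pieces are consistent with the partition into $A$ and $B$ induced by the two equilibrium flows. For the theorem as literally stated, your shortcut suffices, since both numerical bounds are already established elsewhere in the paper. What the paper's structural verification buys---and your argument does not deliver---is the stronger conclusion drawn immediately after the theorem, namely that $\prav=1+\eta\gamma\kappa$ \emph{exactly} for the family of instances admitting alternating paths with up to $\eta$ disjoint forward subpaths ($\eta$ a power of two); that claim requires certifying that the worst-case instance admits an alternating path alternating at every internal vertex and attaining exactly $\eta$ forward subpaths, which is precisely what the paper's recursive path construction provides.
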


\begin{proof}
For an arbitrary $i>1$, we consider the instance with $2^{i+1}=2\eta$ vertices
constructed in Corollary~\ref{cor:verticestight}. In that instance, the only
alternating path has exactly $2^i=\eta$ disjoint forward subpaths.  Indeed,
using Fig.~\ref{example23} as an example of the representation of the graph, we
define an alternating path by recursively choosing the lower component, next
the reverse vertical edge, and last recursively choosing the upper component. By
expanding both recursions, it is not hard to see that the alternating path
covers all $2\eta$ vertices, and its $\eta$ non-vertical edges are disjoint
forward subpaths, as required.  According to the equilibrium flows computed in
Corollary~\ref{cor:verticestight}, the non-vertical edges in the alternating paths
belong to $A$, while the rest of the edges belong to $B$. Hence, the
alternating path is compatible with the definitions of $A$ and $B$, as
required.

For graph sizes $n$ that are not a power of 2, there is a rounding error. For the
lower bound, we need to consider the maximum power of 2 smaller than $n$. The
relative gap satisfies 
$$\frac{UB}{LB} \le
\frac{1+\gamma\kappa\lceil (n-1)/2 \rceil}{1+\gamma\kappa 2^{\lfloor log_2(n))\rfloor}}<2\,.$$
\qed
\end{proof}

In conclusion, $\prav=1+\eta\gamma\kappa$ when the family of instances is defined as 
graphs with arbitrary mean and variance functions that admit alternating paths
with up to $\eta$ disjoint forward subpaths, for $\eta$ equal to a power of 2.
We have equality because the supremum in the definition of $\prav$ is attained
by the instance constructed previously.

\section{Functional Bounds}\label{sec:funct}

In this section, we turn our attention to instances with mean latency functions
restricted to be in a certain family (as, e.g., affine functions).  We prove
upper and lower bounds for the $\prav$ that are asymptotically tight as
$\gamma\kappa$ increases. The results rely on the variational inequality
approach that was first used by~\cite{css-congestion} to prove price of anarchy
(POA) bounds for fixed families of functions. This approach was based on the
properties of the allowed functions. Since then, these properties have been
successively refined by~\cite{harks-atomic,Roughgarden}, and they are now
usually referred to as the {\em local smoothness} property. Although not really
needed for the results here, we use the latter terminology since it has become
standard by now.  To characterize a family of mean latency functions, we rely on
the smoothness property, defined below.

\begin{definition}[\cite{Roughgarden}]
A function $\ell:\R_{\ge 0}\rightarrow \R_{\ge 0}$ is said to be \emph{$(\lambda,\mu)$-smooth around $x\in \R_{\ge 0}$} if 
$y \ell(x)\leq \lambda y \ell(y) +\mu x \ell(x)$ for all $y\in \R_{\ge 0}$.
\end{definition}


Using the previous definition, we construct an upper bound for the $\prav$ when
mean latency functions $\{\ell_e\}_{e\in \GA}$ are $(1,\mu)$-smooth around the
RAWE flow $x_e$ for all edges $e\in\GA$. \citet{meir-wronggame}  proved a
similar bound using a related approach in which they generalize the smoothness
definition to \emph{biased smoothness} which holds with respect to a modified
latency function. In our case, the modified latency function would be
$\ell_e+\gamma v_e$.  One advantage of our approach is its simplicity; it is a
straightforward generalization of the POA proof given in~\cite{css-congestion}.
Also, we only require smoothness around the equilibrium flows, while the biased
smoothness of \cite{meir-wronggame} requires the property for all $x\in
\R_{\geq 0}$.
We provide a proof corresponding to our assumptions, matching what is needed to
get our asymptotically-tight lower bounds.

\begin{theorem}\label{thm:functional-upper}
Consider the set of general instances with mean latency functions $\{\ell_e\}_{e\in
\GA}$ that are $(1,\mu)$-smooth around any RAWE flow $x_e$ for all $e\in\GA$.
\footnote{If the instance admits multiple equilibria, we require smoothness around all of the corresponding flows.} 
Then, with respect to that set of instances,
$$\prav\leq(1+\gamma\kappa)\frac{1}{1-\mu}.$$
\end{theorem}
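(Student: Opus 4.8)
The plan is to mimic the classic variational-inequality proof of the price-of-anarchy bound from \cite{css-congestion}, but carry the extra variance term $\gamma v_e$ through the argument. Let $x$ be the RAWE flow and $z$ the RNWE flow. The starting point is the equilibrium characterization of $x$: since $x$ is a Wardrop equilibrium with respect to the modified costs $\ell_e(x_e)+\gamma v_e(x_e)$ (as noted after Definition~\ref{defi:eq}), it satisfies the variational inequality
\begin{equation*}
\sum_{e\in\GA}\bigl(\ell_e(x_e)+\gamma v_e(x_e)\bigr)x_e \;\le\; \sum_{e\in\GA}\bigl(\ell_e(x_e)+\gamma v_e(x_e)\bigr)z_e,
\end{equation*}
because $x$ minimizes the modified cost over all feasible flows of the same demand, and in particular against the routing $z$. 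I would first establish this inequality carefully, then isolate $\sum_e \ell_e(x_e)x_e = \CC(x)$ on the left by moving the variance terms to the right-hand side.

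Next I would control the two families of terms on the right separately. For the mean terms $\sum_e \ell_e(x_e)z_e$, I apply $(1,\mu)$-smoothness around $x_e$ with $y=z_e$: this gives $z_e\ell_e(x_e)\le z_e\ell_e(z_e)+\mu\, x_e\ell_e(x_e)$, so summing yields $\sum_e \ell_e(x_e)z_e \le \CC(z)+\mu\,\CC(x)$. For the variance terms, I would use the variance-to-mean bound at the equilibrium flow, namely $v_e(x_e)\le\kappa\,\ell_e(x_e)$, to write $\sum_e \gamma v_e(x_e)x_e\le \gamma\kappa\,\CC(x)$; the $\sum_e \gamma v_e(x_e)z_e$ term, which we subtracted, is nonnegative and can simply be dropped. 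Combining, $\CC(x)$ on the left is bounded by $\CC(z)+\mu\,\CC(x)+\gamma\kappa\,\CC(x)$, and rearranging gives $(1-\mu)\CC(x)\le\CC(z)+\gamma\kappa\,\CC(x)$.

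**The main obstacle** I anticipate is handling the variance terms with the right signs so the $\gamma\kappa$ factor lands multiplying the correct quantity. Writing $\CC(x)-\gamma\kappa\,\CC(x)$ naively would be wrong; the cleaner route is to keep the modified-cost inequality intact and bound $\sum_e v_e(x_e)z_e\ge 0$ on the right while bounding $\sum_e v_e(x_e)x_e\le\kappa\,\CC(x)$ only where it helps. Concretely, from the VI we have $\CC(x)+\gamma\sum_e v_e(x_e)x_e\le \sum_e\ell_e(x_e)z_e+\gamma\sum_e v_e(x_e)z_e$; applying smoothness to the first right-hand sum and dropping the nonnegative variance term $\gamma\sum_e v_e(x_e)x_e$ on the left (it only strengthens the bound), I get $\CC(x)\le \CC(z)+\mu\,\CC(x)+\gamma\sum_e v_e(x_e)z_e$. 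The surviving cross term $\gamma\sum_e v_e(x_e)z_e$ is the delicate one: I would bound $v_e(x_e)\le\kappa\,\ell_e(x_e)$ and then re-apply smoothness (or directly the estimate $\ell_e(x_e)z_e\le z_e\ell_e(z_e)+\mu x_e\ell_e(x_e)$) to convert it into $\gamma\kappa(\CC(z)+\mu\CC(x))$-type contributions, keeping track of whether the final constant comes out as $(1+\gamma\kappa)/(1-\mu)$ as claimed.

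Assembling the pieces and solving the resulting linear inequality for the ratio $\CC(x)/\CC(z)$ should produce
\begin{equation*}
\frac{\CC(x)}{\CC(z)}\;\le\;(1+\gamma\kappa)\,\frac{1}{1-\mu},
\end{equation*}
which is the claimed bound. A sanity check I would perform at the end is to set $\kappa=0$ and confirm the expression collapses to the classical $(1-\mu)^{-1}$ price-of-anarchy bound of \cite{roughgarden-priceanarchy}, as the introduction promises; this both validates the algebra and confirms the variance terms have been accounted for consistently.
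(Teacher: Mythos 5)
Your starting point (the variational inequality for the RAWE) and your use of smoothness on the mean terms coincide with the paper's proof, but your handling of the cross term $\gamma\sum_e v_e(x_e)z_e$ leaves a genuine gap: the bound you actually derive is not the one claimed. In your final assembled argument you bound this term by $\gamma\kappa\sum_e z_e\ell_e(x_e)$ and then re-apply smoothness to it, which produces $\gamma\kappa\bigl(\CC(z)+\mu\,\CC(x)\bigr)$; together with the other terms this yields $\bigl(1-\mu(1+\gamma\kappa)\bigr)\CC(x)\le(1+\gamma\kappa)\CC(z)$, i.e.
$$\frac{\CC(x)}{\CC(z)}\;\le\;\frac{1+\gamma\kappa}{1-\mu(1+\gamma\kappa)}\,,$$
which is strictly weaker than $(1+\gamma\kappa)(1-\mu)^{-1}$ and becomes vacuous as soon as $\mu(1+\gamma\kappa)\ge 1$ (e.g., affine latencies have $\mu=1/4$, so your bound says nothing once $\gamma\kappa\ge 3$, whereas the theorem gives $\tfrac{4}{3}(1+\gamma\kappa)$ for all $\gamma\kappa$). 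The compounding term $\gamma\kappa\mu\,\CC(x)$ is precisely what must be avoided, and your sanity check at $\kappa=0$ cannot detect this, since both expressions collapse to $(1-\mu)^{-1}$ there. Your first-paragraph variant is worse: the positive term $\gamma\sum_e v_e(x_e)z_e$ sits on the larger side of the inequality and cannot ``simply be dropped'' (what may be dropped is the \emph{negative} term $-\gamma\sum_e v_e(x_e)x_e$), so the inequality $(1-\mu)\CC(x)\le\CC(z)+\gamma\kappa\,\CC(x)$ is not validly derived, and even if it were it would only give $1/(1-\mu-\gamma\kappa)$.

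The missing idea, which is the heart of the paper's proof, is to partition the edges into $A=\{e: x_e<z_e\}$ and $B=\{e: z_e\le x_e\}$ and treat the variance terms asymmetrically, so that the factor $\gamma\kappa$ never meets the factor $\mu$. Do not discard the left-hand variance term $\gamma\sum_e x_e v_e(x_e)$: on $B$ it dominates the cross term, since $z_e v_e(x_e)\le x_e v_e(x_e)$ there, so the variance contributions on $B$ cancel outright (this is inequality~(\ref{eqn:ineq}) in the paper). On $A$ you bound $v_e(x_e)\le\kappa\,\ell_e(x_e)$, but then you need \emph{no smoothness at all}: $x_e<z_e$ and monotonicity of $\ell_e$ give $z_e\ell_e(x_e)\le z_e\ell_e(z_e)$ directly, so the $A$-edges contribute $(1+\gamma\kappa)\sum_{e\in A}z_e\ell_e(z_e)$. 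Smoothness is applied only on $B$, where no $\gamma\kappa$ factor survives, contributing $\sum_{e\in B}\bigl(z_e\ell_e(z_e)+\mu\, x_e\ell_e(x_e)\bigr)$. Altogether $\CC(x)\le(1+\gamma\kappa)\CC(z)+\mu\,\CC(x)$, which rearranges to the claimed bound. Without this partition---i.e., applying smoothness uniformly to every edge and then paying for the cross term with smoothness again---the compounding $\gamma\kappa\mu\,\CC(x)$ term is unavoidable.
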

\begin{proof}
We consider an instance within the family, a corresponding RAWE flow $x$, and a
RNWE flow $z$.  Further, we let $A=\{e\in\GA|x_e<z_e\}$ and
$B=\{e\in\GA|z_e\leq x_e\}$. Using a variational inequality formulation for the
RAWE \cite{nikolova-stochWE}, we have that
$$\sum_{e\in \GA}x_e(\ell_e(x_e)+\gamma v_e(x_e))  \leq  \sum_{e\in \GA}z_e(\ell_e(x_e)+\gamma v_e(x_e))\,.$$
Partitioning the sum over $\GA$ at both sides into terms for $A$ and $B$,
subtracting the following inequality
\begin{equation}\label{eqn:ineq}
\sum_{e\in A}x_e v_e(x_e) + \sum_{e\in B}x_e v_e(x_e) \geq \sum_{e\in B}z_e v_e(x_e)
\end{equation} 
from it, and further bounding $v_e(x_e)$ by $\kappa \ell_e(x_e)$, we get that
$$\CC(x)=\sum_{e\in A}x_e \ell_e(x_e)+\sum_{e\in B}x_e \ell_e(x_e) \leq \sum_{e\in A}(1+\gamma\kappa)z_e \ell_e(x_e)+\sum_{e\in B}z_e \ell_e(x_e).$$
Inequality (\ref{eqn:ineq}) follows from the non-negativeness of the flow and
the variance, and from the definition of $B$. Applying the definition of $A$ to
the first term in the right-hand side of the last inequality and the
$(1,\mu)$-smoothness condition to the second term, we upper bound the cost and
the result follows.
$$\CC(x)\le\sum_{e\in A}(1+\gamma\kappa) z_e \ell_e(z_e) + \sum_{e\in B} (z_e \ell_e(z_e) +\mu x_e \ell_e(x_e)) 
  \leq (1+\gamma\kappa)\CC(z)+\mu \CC(x).$$
\qed
\end{proof}

The bound in the previous result is similar to that for the POA for nonatomic
games with no uncertainty. Indeed, the result there is that $POA\le
(1-\mu)^{-1}\,$, the same without the $1+\gamma\kappa$ factor.  The values of
$(1-\mu)^{-1}\,$ have been computed for different families of functions in
previous work. To provide some examples, it is equal to $4/3$ for affine
latency functions, and approximately equal to 1.626, 1.896, and 2.151 for
quadratic, cubic, and quartic polynomial latency functions, respectively.  On
the other hand, for unrestricted functions this value is infinite so the bound
becomes vacuous in that case, which provides support for the topological analysis 
of Section~\ref{sec:lb}.

To evaluate the tightness of our upper bounds, we now propose lower bounds for
the $\prav$. More specifically, we provide a family of instances indexed
by $i$ whose latency functions are $(1,\mu_i)$-smooth, for $\mu_i=1-2^{-i}$,
for which the bound is approximately tight. These instances imply lower bounds
equal to $1+\gamma\kappa (1-\mu_i)^{-1}= 1+\gamma\kappa 2^i$. A few remarks
are in order. First, notice that although the lower and upper bounds do not
match, they are similar. The difference is whether the $1$ is or is not
multiplied by the $\mu$ factor. When the $\gamma\kappa$ term is large, both
bounds are essentially equal. Second, notice that for large values of
$i$, necessarily the number of alternations of the longest alternating path
must grow exponentially large to simultaneously match the structural upper
bound presented in Theorem~\ref{thm:ub}.

\begin{theorem}\label{thm:functional-lower}
For any $i>0$, letting $\mu_i=1-2^{-i}$,
$\prav\ge 1+\gamma\kappa (1-\mu_i)^{-1}\,$ for the family of instances
satisfying the $(1,\mu_i)$-smoothness property.
\end{theorem}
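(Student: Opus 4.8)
The plan is to reuse the graph construction from Theorem~\ref{tightexample}, since the author has already flagged (in the remark following that theorem) that the prevailing value of each mean latency function $a_j$ is $0$ under the RNWE flow $z$ and $2^{j-1}$ under the RAWE flow $x$. This is exactly the data needed to check the smoothness parameter, so the functional lower bound should follow by analyzing the smoothness of the specific latency functions arising in $G^i(r^i_A, r^i_N)$ rather than by building a fresh family of instances. First I would instantiate Theorem~\ref{tightexample} with $r^i_A = r^i_N = d$ (as in Corollary~\ref{cor:verticestight}), giving an instance whose equilibria satisfy $\CC(x)/\CC(z) = 1 + \gamma\kappa 2^i$. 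Since $2^i = (1-\mu_i)^{-1}$ for $\mu_i = 1 - 2^{-i}$, this ratio is precisely the claimed lower bound $1 + \gamma\kappa(1-\mu_i)^{-1}$, so the cost ratio is already in hand; the entire content of the proof is to verify that the mean latency functions used in that instance can be chosen to be $(1,\mu_i)$-smooth around the RAWE flow.

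Next I would pin down the smoothness computation. Recall a function $\ell$ is $(1,\mu)$-smooth around $x$ if $y\,\ell(x) \le y\,\ell(y) + \mu\, x\,\ell(x)$ for all $y \ge 0$. For the edges carrying a fixed positive latency, the binding constraint is $y=0$, which forces $\mu \ge 1$ trivially unless $\ell(x)>0$ is handled carefully; the real work is on the edges $a_j$, whose value jumps from $0$ at the RNWE flow to $2^{j-1}\gamma\kappa$ at the RAWE flow. The freedom granted by Theorem~\ref{tightexample} is that $a_j$ may be \emph{any} strictly increasing function with the two prescribed values $a_j(r^i_N/2)=0$ and the appropriate value at the RAWE argument. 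The plan is to choose these functions within the $(1,\mu_i)$-smooth family explicitly—e.g. monomials $c_j\, x^{d}$ of a suitable degree $d$ whose smoothness constant is known to be $\mu_i = 1 - 2^{-i}$—and confirm that the two interpolation conditions can be met by scaling the coefficient $c_j$ without violating smoothness (smoothness is scale-invariant in $\ell$, which makes the coefficient a free parameter). I would state which polynomial degree yields smoothness constant exactly $1 - 2^{-i}$ and cite the standard values (the paper already lists $4/3$, $1.626$, etc., for affine through quartic) to justify that such a degree exists for each target $\mu_i$.

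The main obstacle I anticipate is reconciling the scale-invariance of smoothness with the two-point interpolation constraints across all the nested levels $j = 1, \dots, i$ simultaneously. Each recursive level prescribes a different pair of values, and while a single monomial of the right degree handles one constraint pair by choice of coefficient, I must check that the same degree (hence the same $\mu_i$) works uniformly at every level and that the non-$a_j$ edges (those of constant or zero latency) are also $(1,\mu_i)$-smooth around the relevant flow, or can be replaced by arbitrarily-close smooth functions without disturbing the equilibria. Constant positive functions and the identically-zero function are the delicate cases: I would argue that zero functions are vacuously $(1,\mu)$-smooth for any $\mu$, and that any fixed positive constant can be approximated by a suitable monomial in the smooth family, taking a limit in the cost ratio if a strict constant is not itself smooth. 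Provided these edge cases are absorbed into the chosen function family, the cost ratio computed above is attained (or approached) within the $(1,\mu_i)$-smooth instances, which establishes the theorem.
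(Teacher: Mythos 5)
Your high-level skeleton is the same as the paper's: take the graph of Theorem~\ref{tightexample} with unit demand, note that the equilibrium costs give $\CC(x)/\CC(z)=1+2^i\gamma\kappa=1+\gamma\kappa(1-\mu_i)^{-1}$, and reduce everything to exhibiting latency functions that realize these equilibria while being $(1,\mu_i)$-smooth. The gap is in the one step that carries all the technical content: your choice of monomials $c_j x^d$ for the $a_j$ edges cannot work. The construction needs $a_j$ to vanish (exactly, or with negligible error) at the RNWE flow $2^{j-1}/2^i>0$ on that edge; otherwise the parallel paths no longer cost $1$, the intended $z$ need not remain an equilibrium, and $\CC(z)$ inflates. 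A monomial with $c_j>0$ is strictly positive at every positive argument, so it can only satisfy this approximately, by taking $d$ large: its value at the RNWE flow is $2^{j-1}\gamma\kappa\bigl((2^i-1)/2^i\bigr)^d$. But the tight smoothness constant of degree-$d$ monomials is $\mu_d=d(d+1)^{-(d+1)/d}$, which approaches $1$ only at rate roughly $(\log d)/d$; keeping $\mu_d\le\mu_i=1-2^{-i}$ therefore caps $d$ at roughly $2^i\log(2^i)$, and with such $d$ the residual $\bigl(1-2^{-i}\bigr)^d$ is of order $2^{-i}$. Since a parallel path crosses one $a_j$ edge at every level $j=1,\dots,i$, the risk-neutral side picks up extra latency of order $\gamma\kappa\sum_{j}2^{j-1}2^{-i}\approx\gamma\kappa$, so the achievable ratio is at most about $(1+2^i\gamma\kappa)/(1+\gamma\kappa)\le 2^i$, which stays bounded as $\gamma\kappa$ grows and falls far short of the target $1+\gamma\kappa 2^i$. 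Pushing $d$ higher to kill the residuals pushes $\mu_d$ above $\mu_i$, and the instance leaves the allowed family. So there is no ``suitable degree,'' and this is not a technicality that a limiting argument absorbs.

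The idea you are missing---and the one the paper uses---is that smoothness is only required \emph{around the RAWE flow} (this is exactly how the family in Theorem~\ref{thm:functional-upper} is defined), so one may use functions that are badly non-smooth around other points. The paper keeps the graph but replaces each $a_j$ by a piecewise-linear ramp: identically $0$ on $[0,\,2^{j-1}/2^i]$ and rising linearly to $2^{j-1}\gamma\kappa$ at the RAWE flow $x_j=2^{j-1}/(2^i-1)$. Around $x_j$, the maximizer of $y\,(a_j(x_j)-a_j(y))$ is the kink $y^*=2^{j-1}/2^i$ (the maximizer $x_j/2$ of the linearized problem lies to the left of the kink, where $a_j$ vanishes), so the smoothness constant around $x_j$ is exactly $y^*/x_j=(2^i-1)/2^i=\mu_i$, uniformly over all levels $j$, because the RNWE-to-RAWE flow ratio is the same $(2^i-1)/2^i$ at every level. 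This resolves precisely the two difficulties you flagged---hitting $\mu_i$ exactly, and uniformity across the nested levels---by placing a kink at the RNWE flow rather than by selecting a degree from the discrete menu of constants $\mu_d$. Separately, a small error: constant-latency edges are not delicate at all; for constant $\ell$ the condition $y\,\ell(x)\le y\,\ell(y)+\mu x\,\ell(x)$ holds for every $\mu\ge 0$ and all $y\ge 0$, so no approximation or limiting argument is needed for them.
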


To get the result we use the recursive construction of
Theorem~\ref{tightexample} but with cost functions satisfying the
$(1,\mu_i)$-smoothness condition around the RAWE.  For the given $i$, we
construct a graph $G^i$ that implies that $\prav\ge 1+\gamma\kappa
2^i=1+\gamma\kappa (1-\mu_i)^{-1}\,$. A brief roadmap of the proof is as
follows. We specify the instance, and determine the RNWE flow $z$ and the RAWE
flow $x$ with their costs. From there, we conclude that $\prav\ge
1+\gamma\kappa (1-\mu_i)^{-1}$. Finally, we prove that $\ell_e$ is
$(1,\mu_i)$-smooth around $x_e$ for all $e$. 

\begin{proof}
We consider the graph $G^i$ constructed in Theorem~\ref{tightexample} with
$r^i_A=r^i_N=1$ (see Fig.~\ref{basecase} and~\ref{inductivestep}) but with
alternative functions $a_j(\cdot)$. For the functions defined at level 
$1\leq j\leq i$, we set $a_j(x)=0$ for $x\leq 2^{j-1}/2^{i}$ whereas for larger $x$, 
$a_j(x)$ increases linearly to attain $2^{j-1}\gamma\kappa$ at $x=\frac{2^{j-1}}{2^i-1}$. 
Mathematically, 
$$a_j(x)=\max\left\{ 0, 
\frac{2^{j-1}\gamma\kappa}{\frac{2^{j-1}}{2^i-1}-\frac{2^{j-1}}{2^{i}}}\Big(x-\frac{2^{j-1}}{2^{i}}\Big)\right\}\,.$$

To simplify notation, we refer to edges that have cost function $a_j(\cdot)$ as
$a_j$. Figure~\ref{example23} illustrates the construction for $i=2$ and
$i=3$. As an example, we specify the resulting mean latency functions for $i=2$:
$a_1$ is such that $a_1(\frac{1}{4})=0$ and $a_1(\frac {1}{3})=\gamma\kappa$, and
$a_2$ satisfies $a_2(\frac{1}{2})=0$ and $a_2(\frac{2}{3})=2\gamma\kappa$.  The RNWE flow  
splits the unit flow equally along the 4 paths not containing any
vertical edge. Instead, the RAWE flow splits the unit flow equally along the 3
paths that contain a vertical edge. Evaluating those functions,
$a_j(z)=0$ and $a_j(x)=2^{j-1}\gamma\kappa$, from where
$\prav\ge\CC(x)/\CC(z)=(1+4\gamma\kappa)/1=1+4\gamma\kappa$ for the family of functions that are
$(1,3/4)$-smooth.

%

\begin{figure}[t]
\center
\includegraphics[scale=0.57]{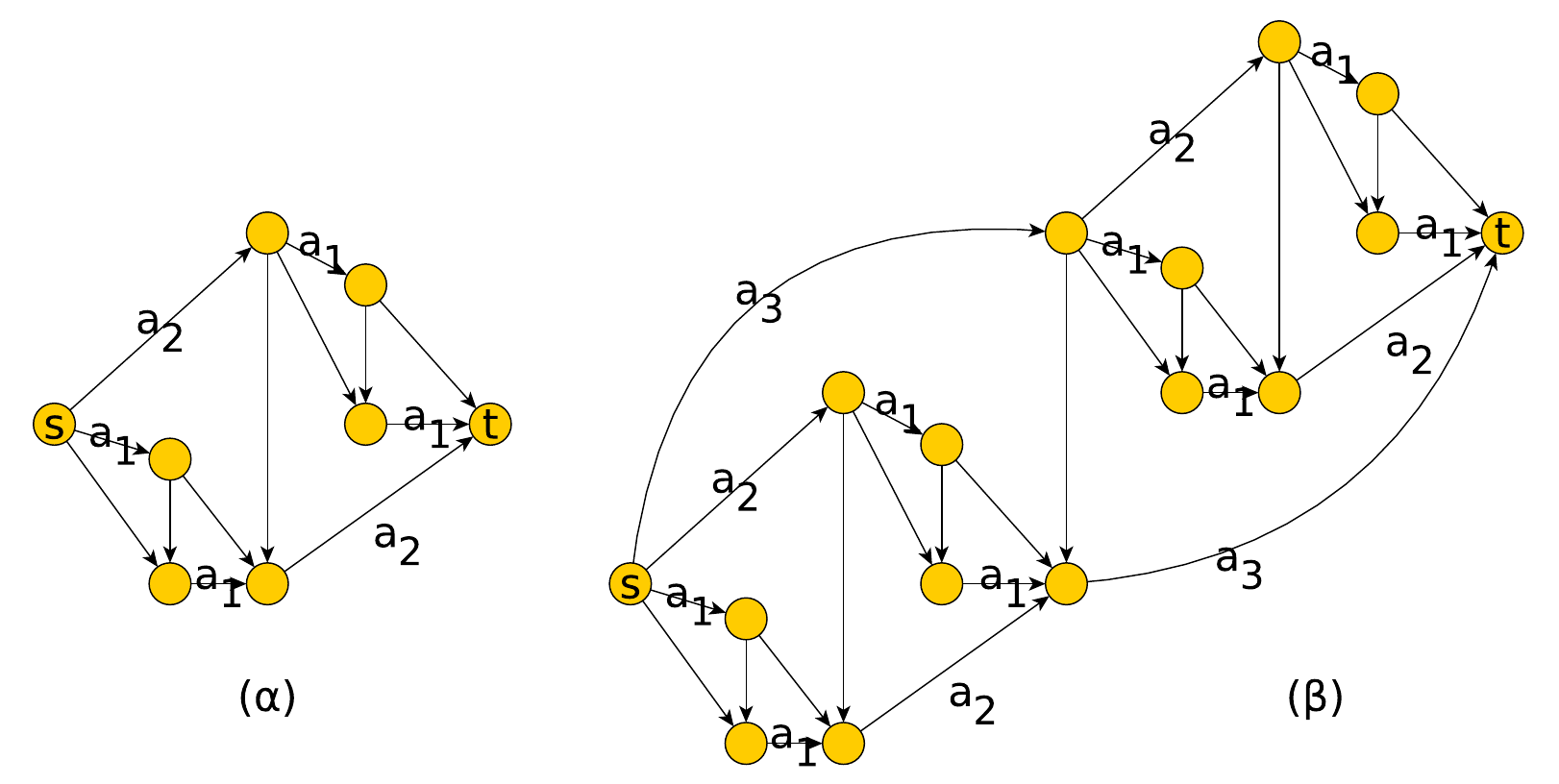}
\caption{The resulting graphs $G^i$ for $i=1$ on the left, and $i=2$ on the
right.  Edges labeled with $a_j$ have the mean latency function with equal name and
$0$ variance.  Vertical edges have mean latency functions equal to $1$ and variance equal to $0$.
Finally, the rest of the edges have mean latency functions equal to $1$ and variance
equal to $\kappa$.}\label{example23} 
\end{figure}

We refer to paths in $G_i$ not containing any vertical edge in representations
such as that of Fig.~\ref{example23} as \emph{parallel paths}. The rest of
the paths, containing a single vertical edge, are referred to as \emph{zig-zag
paths}.  It is not hard to see using an inductive proof on the construction of
$G^i$ that there are $2^i$ parallel paths and there are $2^i-1$ zig-zag paths. 

Generalizing what we saw in the example for $i=2$, the RNWE flow $z$ splits the
unit flow equally along the $2^i$ parallel paths.  To verify that $z$ is at
equilibrium, observe that for each $a_j$ edge, for $1\leq j\leq i$, there are
$2^{j-1}$ parallel paths passing through it.
Consequently, each $a_j$ will get $2^{j-1}/2^i$ units of flow, implying that
their costs are $0$.  Path costs under $z$ are thus the same as those in
Theorem~\ref{tightexample}, which implies that $z$ is indeed a RNWE and
that $\CC(z)=1$. 

On the other hand, the RAWE flow $x$ splits the unit flow equally along the
$2^i-1$ zig-zag paths.  To verify that $x$ is at equilibrium, observe that for
each $a_j$ edge, for $1\leq j\leq i$, there are $2^{j-1}$ zig-zag paths passing
through it.
Consequently, each $a_j$ will get $2^{j-1}/(2^i-1)$ units of flow, implying that
their costs are $2^{j-1}\gamma\kappa$. Path costs under $x$ are thus the same as those in
Theorem~\ref{tightexample}, which implies that $x$ is indeed the RAWE and
that $\CC(x)=1+2^i\gamma\kappa$. From there, the bound for $\prav$ in the statement of
the theorem follows.

What remains to be shown is that for any chosen $G^i$, functions $a_j$ are
$(1,\mu_i)$-smooth around $x_j$, where $x_j=2^{j-1}/(2^i-1)$ is the RAWE flow
at edge $a_j$. The other cost functions are constant so they trivially satisfy
the smoothness properties.  To prove this, let us consider $1\leq j\leq i$.
From the definition of smoothness we need to show that $y a_j(x_j)\leq y
a_j(y)+\mu_i x_j a_j(x_j)$ for all $y\in\R_{\ge 0}$.  Equivalently, we can show
that
$$1-2^{-i}=\mu_i\geq \frac{\max_{y\in\R_{\ge 0}}y(a_j(x_j)-a_j(y))}{x_j a_j(x_j)}\,.$$
First, note that the maximum is attained in the interval $[2^{j-1}/2^{i},x_j]$
since $a_j(y)=0$ to the left of the interval, and the argument of the maximum
becomes negative to the right. Since $a_j$ is linear in that interval, we solve
the maximum problem by extending it linearly to the whole domain.  Since
additive constants are irrelevant because we maximize the difference of $a_j$
evaluated in two points, we modify the linearized function and add a constant
so it evaluates to 0 at 0.  For linear functions that cross the origin, the
maximizer of the problem is $x_j/2$ \cite{css-congestion}.  Because $x_j/2$ is
to the left of the interval where the maximizer must be, the maximizer with respect to $a_j$ is
$y^*=2^{j-1}/2^{i}$ and $a_j(y^*)=0$.
From there,
$$\frac{\max_{y\in\R_{\ge 0}}y(a_j(x_j)-a_j(y))}{x_j a_j(x_j)}
=\frac{y^*}{x_j}=\frac{\frac{2^{j-1}}{2^{i}}}{\frac{2^{j-1}}{2^i-1}}=\frac{2^i-1}{2^i}=\mu_i\,.$$
\qed
\end{proof}

\section{The Mean-Standard Deviation Model}\label{sec:stdev}
\label{stdevbounds}

In this section, we turn to the mean-standard deviation model and prove upper
and lower structural bounds on $\pras$, now assuming that $\kappa$ is the
maximum among edges of the coefficient of variation $CV_e(f_e)$, defined as the
ratio between the standard deviation and the mean.  The lower bounds follow
from the same instances that were used to prove the lower bounds in the
mean-variance case in Section~\ref{sec:lb}. Considering families of graphs with
up to $\eta$ forward disjoint subpaths and general mean latency and standard
deviation functions, we prove upper bounds for the cases $\eta=\{1,2\}$, and
lower bounds for arbitrary $\eta$. For $\eta\le 2$, both bounds are
valid and coincide, so this analysis characterizes the $\pras$ for the standard
deviation case exactly.

Since we are now dealing with standard deviations, we redefine
$\pathcost^{\gamma}_{\route}(\flow) = \ell_{\route}(\flow)+\gamma
\std_{\route}(\flow)$, where $\std_{\route}(\flow)= (\sum_{e\in\route}
\std^2_e(\flow_e))^{1/2}$. Given an instance based on a graph $G$, we refer to
a RNWE flow by $z$ and to a RAWE flow by $y$. Further, we denote the (constant)
path cost perceived by players at the RAWE $y$ by $\pcdev(y)$ and the
(constant) expected latencies perceived by players at the RNWE $z$ by $\pc(z)$.
The definition for the price of risk aversion is analogous to that given in
Section~\ref{sec:model}.

Our first result provides inequalities that relate the social cost at
equilibrium with the perceived utilities. The first part is known and the
second is an easy generalization.

\begin{proposition}\label{props:eqRelSC}
For an arbitrary instance, $\CC(z)= d\pc(z)$ and $\CC(y)\leq d\pcdev(y)$, where $d$ is the traffic demand. 
\end{proposition}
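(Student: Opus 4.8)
The plan is to exploit the definition of the social cost as a flow-weighted sum of expected latencies and, at each equilibrium, to compare the per-path expected latency against the per-path perceived cost. First I would recall that $\CC(\flow)=\sum_{\route\in\paths}\flow_\route\ell_\route(\flow)$, and observe that paths carrying no flow contribute nothing to this sum, so every summation below may be restricted to used paths without loss of generality.

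For the equality $\CC(z)=d\,\pc(z)$, I would invoke the equilibrium condition for the risk-neutral flow. Since a RNWE is a $0$-equilibrium in the sense of Definition~\ref{defi:eq}, every used path has expected latency equal to the common value $\pc(z)$. Substituting $\ell_\route(z)=\pc(z)$ into the definition of social cost and factoring out this constant gives $\CC(z)=\pc(z)\sum_{\route\in\paths}z_\route=d\,\pc(z)$, where the last step uses the demand constraint $\sum_{\route\in\paths}z_\route=d$.

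For the inequality $\CC(y)\le d\,\pcdev(y)$, I would use that at the RAWE every used path $\route$ satisfies $\ell_\route(y)+\gamma\std_\route(y)=\pcdev(y)$, so that $\ell_\route(y)=\pcdev(y)-\gamma\std_\route(y)\le\pcdev(y)$, because $\gamma\ge 0$ and $\std_\route(y)\ge 0$. Summing the flow-weighted expected latencies over used paths then yields $\CC(y)=\sum_{\route\in\paths}y_\route\ell_\route(y)\le\pcdev(y)\sum_{\route\in\paths}y_\route=d\,\pcdev(y)$.

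The only conceptual point—and precisely the reason the second relation is an inequality while the first is an equality—is that the social cost is measured purely in expected latency, whereas the perceived cost at the risk-averse equilibrium additionally charges the nonnegative standard-deviation term $\gamma\std_\route(y)$. I do not anticipate any genuine obstacle: the argument is a direct substitution of the equilibrium characterization into the definition of the social cost, with the sign of the omitted variability term determining the direction of the inequality.
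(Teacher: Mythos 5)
Your proof is correct and follows essentially the same route as the paper's: both arguments substitute the common equilibrium path cost into the definition of social cost over used paths, use nonnegativity of the term $\gamma\std_{\route}(y)$ to obtain the inequality in the risk-averse case, and factor out the demand constraint $\sum_{\route\in\paths}\flow_{\route}=d$. No gaps.
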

\begin{proof}
Using that at equilibria all used paths have equal path costs, we get:
\begin{enumerate}
\item 
$\CC(z)=\sum_{p\in\paths} z_p\ell_p(z)= d\pc(z)$
\item 
$\CC(y)=\sum_{p\in\paths} y_p\ell_p(y)\leq \sum_{p\in\paths} y_p(\ell_p(y)+\gamma\sigma_p(y))= d\pcdev(y)$.\hfill\qed
\end{enumerate}
\end{proof}

As in Section~\ref{sec:lb}, we partition the edge-set $\GA$ into two:
$A=\{e\in\GA:y_e<z_e\}$ and $B=\{e\in\GA:z_e\leq y_e\}$.  We assume that all
edges in $\GA$ are used either by either flow $y$ or $z$.  This is without loss
of generality because unused edges can be deleted without any consequence.
The definition of $A$ implies that $z_e>0$ for all $e\in A$, while 
the assumption implies that $y_e>0$ for all $e\in B$.
To prove an upper bound on $\pras$, we rely again on alternating paths.
By Lemma~4.5 of \cite{Nikolova-Stier14wp}, we know that an alternating path
must always exist. The next result specifies that when the alternating path is
simple (i.e., it is an actual $s$-$t$ path), then the $\prav$ is low.  This result,
which is related to the result for series-parallel graphs in \cite{Nikolova-Stier15},
sets the stage for the more ambitious result below.


\begin{lemma}\label{lem:zeroAlt}
Considering the set of instances on general topologies with arbitrary mean latency
and standard deviation functions that admit alternating paths that are (actual)
paths (i.e., $0$ alternations), $\prav\leq 1+\gamma \kappa$.
\end{lemma}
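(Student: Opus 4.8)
The plan is to reduce the cost ratio to a ratio of the constant per-player path costs and then use the given forward $s$-$t$ path to transfer the bound from the risk-averse to the risk-neutral equilibrium. By Proposition~\ref{props:eqRelSC} we have $\CC(z)=d\,\pc(z)$ and $\CC(y)\le d\,\pcdev(y)$, so it suffices to prove $\pcdev(y)\le(1+\gamma\kappa)\,\pc(z)$; dividing then gives $\prav=\CC(y)/\CC(z)\le 1+\gamma\kappa$.

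Let $\pi$ be the hypothesized alternating path with $0$ alternations. By definition it is a genuine $s$-$t$ path whose edges all lie in $A$, i.e.\ $y_e<z_e$ for every $e\in\pi$. Since $y$ is a RAWE and $\pi$ is a feasible path, the equilibrium condition gives $\pcdev(y)\le\ell_\pi(y)+\gamma\sigma_\pi(y)$. I would control the deviation term edgewise using the coefficient-of-variation bound $\sigma_e(y_e)\le\kappa\,\ell_e(y_e)$ together with the elementary fact $\big(\sum_{e\in\pi}\sigma_e^2(y_e)\big)^{1/2}\le\sum_{e\in\pi}\sigma_e(y_e)$, obtaining $\sigma_\pi(y)\le\kappa\,\ell_\pi(y)$ and hence $\ell_\pi(y)+\gamma\sigma_\pi(y)\le(1+\gamma\kappa)\,\ell_\pi(y)$. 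Monotonicity of the latency functions together with $y_e<z_e$ on $\pi$ then gives $\ell_\pi(y)\le\ell_\pi(z)$, so that $\pcdev(y)\le(1+\gamma\kappa)\,\ell_\pi(z)$.

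The one nontrivial step, and the part I expect to be the main obstacle, is to show $\ell_\pi(z)=\pc(z)$, i.e.\ that $\pi$ is itself a shortest (used) path under the RNWE. Every edge of $\pi$ lies in $A$ and hence carries strictly positive $z$-flow, so each such edge lies on some used path of $z$, which at a Wardrop equilibrium is a shortest $s$-$t$ path in the metric $e\mapsto\ell_e(z_e)$. Letting $d_s(\cdot)$ and $d_t(\cdot)$ denote shortest-path distances from $s$ and to $t$ in this metric, I would argue that every edge $e=(u,v)\in\pi$ is tight, $d_s(u)+\ell_e(z_e)+d_t(v)=\pc(z)$ (its $s$-$u$ and $v$-$t$ completions along a used path are themselves shortest), which combined with the relaxation inequality $d_s(v)\le d_s(u)+\ell_e(z_e)$ forces the equality $d_s(v)=d_s(u)+\ell_e(z_e)$ all along $\pi$. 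Telescoping from $s$ to $t$ then yields $\ell_\pi(z)=d_s(t)-d_s(s)=\pc(z)$, and combining with the previous paragraph gives $\pcdev(y)\le(1+\gamma\kappa)\,\pc(z)$, as required. If the existence result behind the alternating path (Lemma~4.5 of \cite{Nikolova-Stier14wp}) already certifies that the forward subpath is $z$-used, this step is immediate. I note finally that variability enters only through the edgewise bound at the flow $y$, so unlike the variational-inequality argument of Section~\ref{sec:funct} this proof needs no monotonicity of the standard-deviation functions.
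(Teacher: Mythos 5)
Your proposal is correct and follows essentially the same chain as the paper's proof: reduce via Proposition~\ref{props:eqRelSC}, apply the RAWE equilibrium condition to $\pi$, bound the deviation term with the 1-norm/2-norm inequality and the coefficient-of-variation bound, and use monotonicity of $\ell_e$ together with $y_e<z_e$ on $A$. The only difference is in justifying $\ell_\pi(z)=\pc(z)$: the paper notes that since $z_e>0$ on $\pi$ a flow decomposition exists in which $\pi$ is used (hence $\pi$ has the common equilibrium latency), whereas you prove the same standard fact via shortest-path distance labels and edge tightness---a minor, equally valid variant rather than a genuinely different route.
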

\begin{proof}
We let $\pi$ be an alternating path consisting of just arcs in $A$ (i.e., it is
an actual $s$-$t$ path).  Let $p$ be any used path under the RAWE $y$. Using
the equilibrium conditions, the relationship between 1-norms and 2-norms, and
the coefficient of variation bound $\kappa$, 
$$\pcdev(y)\le\sum_{e\in p}\ell_e(y)+\gamma\sqrt{\sum_{e\in p}\sigma^2_e(y)}
\leq  \sum_{e\in \pi}\ell_e(y)+\gamma\sqrt{\sum_{e\in \pi}\sigma^2_e(y)}
\leq (1+\gamma\kappa) \sum_{e\in \pi}\ell_e(y).$$
Using the monotonicity of the mean latency functions, $\ell_e(y)\leq\ell_e(z)$ for $e\in\pi$, from where
$\pcdev(y)\leq (1+\gamma\kappa) \sum_{e\in \pi}\ell_e(z)$.
Since $z_e>0$ for all $e\in\pi$ because of the property stated earlier, a flow decomposition can be found where 
$\pi$ is used under the RNWE $z$, implying that $\CC(z)=d \sum_{e\in \pi}\ell_e(z)$.
Finally, using Proposition \ref{props:eqRelSC}, $\CC(y)\leq
d\pcdev(y)\le d (1+\gamma\kappa) \sum_{e\in \pi}\ell_e(z)= (1+\gamma\kappa)\CC(z)$,
from where we get the bound on $\pras$.
\qed
\end{proof}

We now generalize the result for Braess graphs given in \cite{Nikolova-Stier15}
to general graphs that admit alternating paths with a single alternation.

\begin{lemma}\label{lem:oneAlt}
Considering the set of instances on general topologies with arbitrary mean latency
and standard deviation functions that admit alternating paths with $2$ disjoint
forward subpaths, $\prav\leq 1+2\gamma \kappa$.
\end{lemma}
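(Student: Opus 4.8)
The plan is to follow the blueprint of Lemma~\ref{lem:zeroAlt} but to exploit the \emph{two} forward subpaths of the alternating path to absorb two units of variability. By Proposition~\ref{props:eqRelSC} it suffices to prove the path-cost inequality $\pcdev(y)\le(1+2\gamma\kappa)\pc(z)$, since then $\CC(y)\le d\,\pcdev(y)\le(1+2\gamma\kappa)\,d\,\pc(z)=(1+2\gamma\kappa)\CC(z)$. Write the one-alternation alternating path as $\pi=A_1$-$B_1$-$A_2$, where $A_1$ runs forward from $s$ to a vertex $u$, the backward block $B_1$ consists of $B$-edges directed from a vertex $w$ to $u$, and $A_2$ runs forward from $w$ to $t$. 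Every edge of $A_1\cup A_2$ lies in $A$, so $z_e>0$ there and each of $A_1,A_2$ extends to a used RNWE path; let $T$ be a $z$-tail from $u$ to $t$ and $H$ a $z$-head from $s$ to $w$, so that $A_1\cup T$ and $H\cup A_2$ are RNWE paths of latency $\pc(z)$. Every edge of $B_1$ lies in $B$, so $y_e>0$ there and there is a used RAWE path $p^\ast=S_w\cup B_1\cup S_u$ through $B_1$, with $S_w$ a $y$-head from $s$ to $w$ and $S_u$ a $y$-tail from $u$ to $t$.

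First I would isolate the latency contribution (the ``$1$''). Since $H\cup B_1\cup T$ is an $s$-$t$ path and $z$ is at equilibrium, $\ell_H(z)+\ell_{B_1}(z)+\ell_T(z)\ge\pc(z)$; combining this with $\ell_{A_1}(z)+\ell_T(z)=\pc(z)$ and $\ell_H(z)+\ell_{A_2}(z)=\pc(z)$ yields $\ell_{A_1}(z)+\ell_{A_2}(z)-\ell_{B_1}(z)\le\pc(z)$. Monotonicity of the mean-latency functions transfers this to $y$, because $\ell_e(y)\le\ell_e(z)$ on $A_1\cup A_2\subseteq A$ and $\ell_e(z)\le\ell_e(y)$ on $B_1\subseteq B$, giving
\[
\ell_{A_1}(y)+\ell_{A_2}(y)-\ell_{B_1}(y)\ \le\ \pc(z).
\]

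Next I would extract the variability term (the ``$2\gamma\kappa$''). The aim is to apply the RAWE equilibrium to the two \emph{completed} forward paths $A_1\cup S_u$ and $S_w\cup A_2$ (both are genuine $s$-$t$ paths), obtaining two upper bounds on $\pcdev(y)$, and to subtract the used-path identity $\pcdev(y)=\ell_{S_w}(y)+\ell_{B_1}(y)+\ell_{S_u}(y)+\gamma\,\sigma_{p^\ast}(y)$. Adding the two inequalities and subtracting the identity cancels the latencies of the connecting segments $S_u,S_w$ (latency is additive), leaving $\pcdev(y)$ bounded by the latency combination above plus a standard-deviation remainder. The standard deviation of each forward block is controlled by the coefficient-of-variation bound and monotonicity, since $\sqrt{\sum_{e\in A_i}\sigma_e^2(y)}\le\sum_{e\in A_i}\sigma_e(y)\le\kappa\,\ell_{A_i}(y)\le\kappa\,\ell_{A_i}(z)\le\kappa\,\pc(z)$, so the two forward blocks should together contribute at most $2\gamma\kappa\,\pc(z)$, while the block $B_1$ enters with a favorable sign and is dropped.

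The hard part will be this standard-deviation bookkeeping, which is exactly where the mean-stdev model departs from the additive mean-var model of Theorem~\ref{thm:ub}. In the variance model the connecting segments cancel \emph{exactly} when the two deviation inequalities are combined with the used-path identity, because $\var$ is additive over edges. Here the path standard deviations are Euclidean norms $\sqrt{\sum_e\sigma_e^2}$, so one must peel off the $A_i$-contribution from each completed path by the triangle inequality and then absorb the shared-segment and $B_1$ contributions against the \emph{single} negative term $\gamma\,\sigma_{p^\ast}(y)$, using monotonicity of the norm in the edge set. Carrying this out so that the surviving constant is exactly $1+2\gamma\kappa$ is the delicate point: a single negative norm term can offset the connecting segments of only one alternation, which is precisely the intuition for why the argument is confined to at most two forward subpaths and why extending it to $\eta\ge3$ is the open direction highlighted in the introduction.
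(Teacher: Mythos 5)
Your skeleton coincides with the paper's proof: the same decomposition $\pi=A_1$-$B_1$-$A_2$, the same auxiliary paths (your $p^\ast=S_w$-$B_1$-$S_u$ is the paper's $C_1$-$B_1$-$D_1$, and your two completed forward paths $A_1\cup S_u$, $S_w\cup A_2$ are exactly the alternatives the paper compares against), the same reduction via Proposition~\ref{props:eqRelSC}, and the same handling of the latency part, $\ell_{A_1}(z)+\ell_{A_2}(z)-\ell_{B_1}(z)\le\pc(z)$. The genuine gap is in the standard-deviation bookkeeping, which you correctly flag as the crux but whose proposed mechanism fails. Adding the two RAWE inequalities, subtracting the used-path identity, and peeling by the triangle inequality leaves the deviation remainder
\[
\gamma\bigl(\sigma_{A_1}(y)+\sigma_{A_2}(y)\bigr)
+\gamma\Bigl(\sigma_{S_u}(y)+\sigma_{S_w}(y)-\sqrt{\sigma^2_{S_w}(y)+\sigma^2_{B_1}(y)+\sigma^2_{S_u}(y)}\Bigr),
\]
and your absorption step needs the second parenthesis to be nonpositive, i.e.\ $\sigma_{S_u}(y)+\sigma_{S_w}(y)\le\sigma_{p^\ast}(y)$. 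Euclidean norms are \emph{sub}additive, so this runs exactly the wrong way: taking $\sigma_{B_1}=\sigma_{A_1}=\sigma_{A_2}=0$ and $\sigma_{S_u}=\sigma_{S_w}=S$, the leftover is $(2-\sqrt{2})\gamma S$, unbounded in $S$ and unrelated to $\gamma\kappa(\ell_{A_1}(y)+\ell_{A_2}(y))$, since the connecting segments' variances are not constrained by the latencies of $A_1,A_2$. A single negative norm cannot absorb two separate norms; your own closing remark (``a single negative norm term can offset the connecting segments of only one alternation'') in fact defeats your combination scheme, which requires it to offset two.

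The paper avoids this by never adding the two equilibrium inequalities. Each is processed separately with a lossless peeling fact: for $\alpha\le\beta$ and $\delta\ge0$, $\sqrt{\beta+\delta}-\sqrt{\alpha+\delta}\le\sqrt{\beta}-\sqrt{\alpha}$, applied with $\delta$ equal to the variance of the connecting segment shared by the used path and the alternative ($S_u$ in the comparison against $A_1\cup S_u$, and $S_w$ in the comparison against $S_w\cup A_2$). The complementary case $\alpha>\beta$ is handled by noting that then the equilibrium inequality already forces the pure latency inequality (e.g.\ $\ell_{S_w}(y)+\ell_{B_1}(y)\le\ell_{A_1}(y)$), after which the edge-wise coefficient-of-variation bound finishes. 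This yields the two shared-segment-free inequalities
\[
\ell_{S_w}(y)+\ell_{B_1}(y)+\gamma\sqrt{\sigma^2_{S_w}(y)+\sigma^2_{B_1}(y)}\le(1+\gamma\kappa)\,\ell_{A_1}(y),
\qquad
\ell_{S_u}(y)+\gamma\sigma_{S_u}(y)\le(1+\gamma\kappa)\,\ell_{A_2}(y)-\ell_{B_1}(y),
\]
and then a \emph{single} application of norm subadditivity inside the used-path identity, $\sigma_{p^\ast}(y)\le\sqrt{\sigma^2_{S_w}(y)+\sigma^2_{B_1}(y)}+\sigma_{S_u}(y)$, lets each resulting piece be bounded by the corresponding inequality, giving $\pcdev(y)\le(1+\gamma\kappa)(\ell_{A_1}(y)+\ell_{A_2}(y))-\ell_{B_1}(y)$; monotonicity across $A$ and $B$ and your latency bound then close the argument. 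The missing ingredient is precisely this concavity-type peeling lemma with its case analysis; without it, the combination you sketch does not yield $1+2\gamma\kappa$.
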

\begin{proof}
%
\begin{figure}[t]
\centering
\includegraphics[scale=0.5]{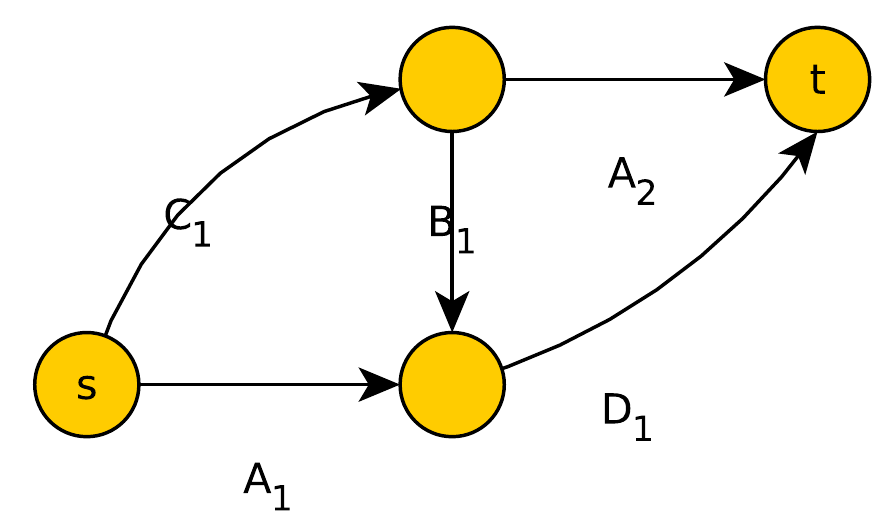}
\caption{$A_1$-$B_1$-$A_2$ is an alternating path with 2 disjoint forward subpaths.}
\label{alternatingfromscratch}
\end{figure}
%
We let $\pi=A_1$-$B_1$-$A_2$ be the alternating path with 2 disjoint forward
subpaths $A_1$ and $A_2$, and reverse subpath $B_1$, where these subpaths belong
to $A$ or $B$ correspondingly.  Figure~\ref{alternatingfromscratch} illustrates
the topology of these subpaths.  Consider a RAWE flow $y$ and a flow-carrying
path $C_1$-$B_1$-$D_1$ under $y$.  Such a path must exist because edges in
$B_1$ carry flow under $y$ as it was mentioned earlier. Using the equilibrium
conditions for $y$, 
\begin{equation}\label{ineq:eqcondy}
\ell_{C_1}(y)+\ell_{B_1}(y)+\ell_{D_1}(y)+\gamma\sqrt{\sigma^2_{C_1}(y)+\sigma^2_{B_1}(y)+\sigma^2_{D_1}(y)}\leq
\ell_{A_1}(y)+\ell_{D_1}(y)+\gamma\sqrt{\sigma^2_{A_1}(y)+\sigma^2_{D_1}(y)}. 
\end{equation}
Let us first assume that $\sigma^2_{C_1}(y)+\sigma^2_{B_1}(y)+\sigma^2_{D_1}(y)\leq \sigma^2_{A_1}(y)+\sigma^2_{D_1}(y)$.
For $\alpha\le\beta$ and $\delta\ge0$, it can be proved that $\sqrt{\beta+\delta}-\sqrt{\alpha+\delta} \leq \sqrt{\beta}-\sqrt{\alpha}$.
Letting $\alpha=\sigma^2_{C_1}(y)+\sigma^2_{B_1}(y)$, $\beta=\sigma^2_{A_1}(y)$, and $\delta=\sigma^2_{D_1}(y)$, 
(\ref{ineq:eqcondy}) implies that 
\begin{equation}
\label{ineq:pathcostC1B1}
\ell_{C_1}(y)+\ell_{B_1}(y)+\gamma\sqrt{\sigma^2_{C_1}(y)+\sigma^2_{B_1}(y)}\leq
\ell_{A_1}(y)+\gamma\sqrt{\sigma^2_{A_1}(y)}\leq (1+\gamma\kappa)\ell_{A_1}(y). 
\end{equation}
If $\sigma^2_{C_1}(y)+\sigma^2_{B_1}(y)+\sigma^2_{D_1}(y)> \sigma^2_{A_1}(y)+\sigma^2_{D_1}(y)$,
(\ref{ineq:eqcondy}) implies that 
$\ell_{C_1}(y)+\ell_{B_1}(y)\leq \ell_{A_1}(y)$, from where
$$ \ell_{C_1}(y)+\ell_{B_1}(y)+\gamma\sqrt{\sigma^2_{C_1}(y)+\sigma^2_{B_1}(y)}\leq (1+\gamma\kappa)( \ell_{C_1}(y)+\ell_{B_1}(y)) \leq (1+\gamma\kappa)\ell_{A_1}(y).$$
Hence, in both cases the same inequality holds.
%
%
%
Using a similar argument for the path $C_1$-$A_2$ instead of $A_1$-$D_1$, we get
%
$\ell_{D_1}(y)+\ell_{B_1}(y)+\gamma(\sigma^2_{D_1}(y)+\sigma^2_{B_1}(y))^{1/2}\leq (1+\gamma\kappa)\ell_{A_2}(y)$.
From the monotonicity of the square root, we further derive that 
\begin{equation}
\label{ineq:pathcostB1D1}
\ell_{D_1}(y)+\gamma\sigma_{D_1}(y)\leq \ell_{D_1}(y)+\gamma\sqrt{\sigma^2_{D_1}(y)+\sigma^2_{B_1}(y)}\leq (1+\gamma\kappa)\ell_{A_2}(y)-\ell_{B_1}(y). 
\end{equation}
Since the path is used under $y$,
$\pcdev(y)=\ell_{C_1}(y)+\ell_{B_1}(y)+\ell_{D_1}(y)+\gamma(\sigma^2_{C_1}(y)+\sigma^2_{B_1}(y)+\sigma^2_{D_1}(y))^{1/2}$.
Using again the norm-1, norm-2 inequality, together with (\ref{ineq:pathcostC1B1}) and (\ref{ineq:pathcostB1D1}), and the 
definitions of $A$ and $B$, we can upper bound the
previous expression with 
\begin{multline}
 \label{ineq:equily}
\ell_{C_1}(y)+\ell_{B_1}(y)+\ell_{D_1}(y)+\gamma\sqrt{\sigma^2_{C_1}(y)+\sigma^2_{B_1}(y)}+\gamma\sigma_{D_1}(y) 
\leq (1+\gamma\kappa)(\ell_{A_1}(y)+\ell_{A_2}(y))-\ell_{B_1}(y)\\
\leq (1+\gamma\kappa)(\ell_{A_1}(z)+\ell_{A_2}(z))-\ell_{B_1}(z). 
\end{multline}

Now we derive a related bound for the RNWE flow $z$.  For a potentially
different $D_1$, the path $A_1$-$D_1$ must carry flow under $z$.  Such a path must
exist because $z_e>0$ for $e\in A_1$. Furthermore, by the equilibrium
conditions for $z$ applied to path $A_2$, which also carries flow under $z$,
$\ell_{A_2}(z)\leq \ell_{B_1}(z)+\ell_{D_1}(z)$. Putting both remarks together,
$\pc(z)=\ell_{A_1}(z)+\ell_{D_1}(z)\geq \ell_{A_1}(z)+ \ell_{A_2}(z)- \ell_{B_1}(z)$.
Combining the last inequality with (\ref{ineq:equily}), and the fact that 
$\pc(z)$ is an upper bound for both $\ell_{A_1}(z)$ and $\ell_{A_2}(z)$, we get 
$$\pcdev(y)\leq \pc(z) + \gamma\kappa (\ell_{A_1}(z)+ \ell_{A_2}(z))\leq (1+2\gamma\kappa )\pc(z).$$
That implies the result since Proposition~\ref{props:eqRelSC} yields
$\CC(y)\leq d\pcdev(y)\leq d (1+2\gamma\kappa )\pc(z)=(1+2\gamma\kappa) \CC(z)$.
\qed
\end{proof}

%
%

Lemma~\ref{lem:zeroAlt} implies that series-parallel graphs satisfy that
$\pras\leq 1+\gamma\kappa$ \cite{Nikolova-Stier15} and it is well know that
those graphs can be characterized as not containing the Braess graph.  Along
that line, we provide a forbidden minor characterization of
Lemma~\ref{lem:oneAlt}: the topology of instances for which $\pras\leq
1+2\gamma\kappa$ is characterized by graphs not containing the domino-with-ears
graph as a minor. This graph is defined as a domino graph
\cite{graphclasses} with two arcs joining the respective pair of vertices at distance 2.
The result extends one in \cite{Nikolova-Stier15}
that establishes that $\pras\leq 1+2\gamma\kappa$ for the Braess graph.
Figure~\ref{fig:forbiddenGraphs} depicts the forbidden graphs associated with
the previous two lemmas.  As a quick checkup of our results, note that the
domino-with-ears graph contains the Braess graph as a minor so it cannot belong
to the family of instances for which $\pras\leq 1+\gamma\kappa$.  Indeed,
contracting the lower-left and upper-right vertical edges in the
domino-with-ears graph produces the Braess graph.  

\begin{figure}[t]
\center
\includegraphics[scale=0.55]{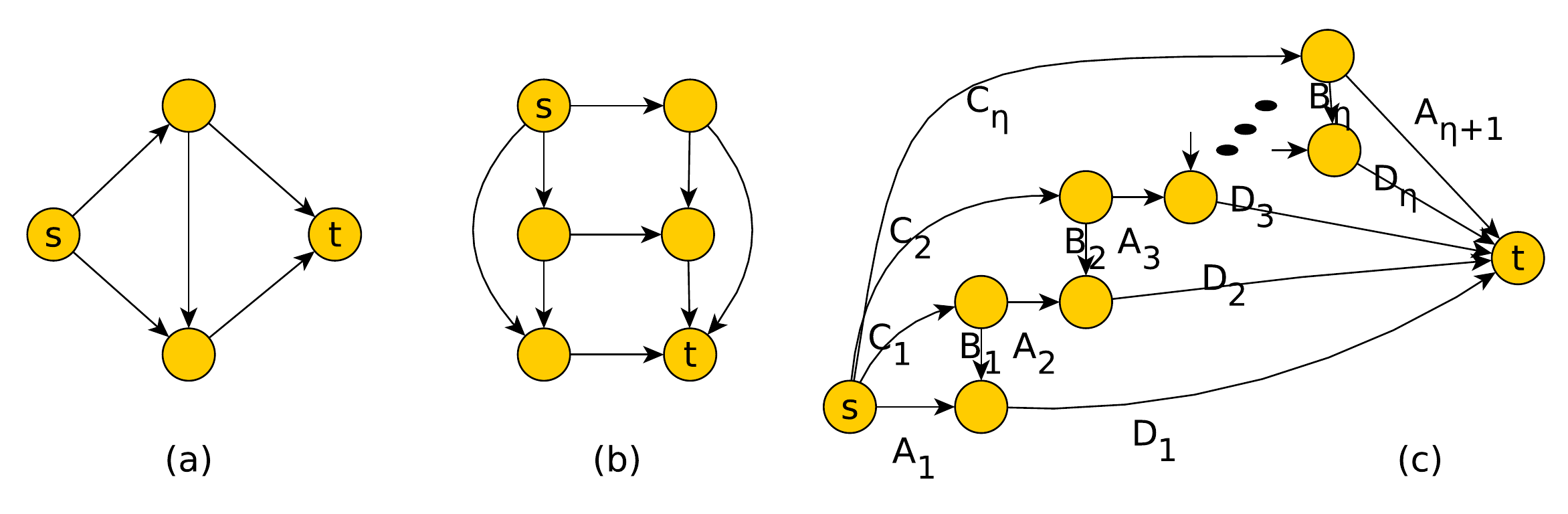}
\caption{(a) The Braess graph. (b) The domino-with-ears graph.
(c) A graph admitting an
alternating path with $\eta+1$ disjoint forward subpaths.
}
\label{fig:forbiddenGraphs}
\end{figure}

\begin{theorem}\label{thm:ear}
Considering the set of instances on general topologies with arbitrary mean latency
and standard deviation functions that do not have domino-with-ears as a minor,
$\prav\leq 1+2\gamma \kappa$.
\end{theorem}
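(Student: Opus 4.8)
The plan is to reduce Theorem~\ref{thm:ear} to Lemma~\ref{lem:oneAlt} by isolating the purely topological content: that a graph with no domino-with-ears minor always admits an alternating path with at most two disjoint forward subpaths. Concretely, I would fix any instance on such a graph $G$, a RAWE $y$, a RNWE $z$, and the induced partition $\GA=A\cup B$. By Lemma~4.5 of \cite{Nikolova-Stier14wp} an alternating path exists, so among all of them I would choose one, call it $\pi$, minimizing the number of alternations. If $\pi$ has at most one alternation (equivalently, at most two disjoint forward subpaths), we are done immediately: Lemma~\ref{lem:zeroAlt} gives $\prav\le 1+\gamma\kappa\le 1+2\gamma\kappa$ in the zero-alternation case, and Lemma~\ref{lem:oneAlt} gives $\prav\le 1+2\gamma\kappa$ in the one-alternation case. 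It therefore remains to rule out, by contradiction, that this minimum is two or more, and I would do so by exhibiting a domino-with-ears minor in $G$, contradicting the hypothesis.

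So suppose $\pi$ has at least two alternations, and examine its initial portion $A_1$-$B_1$-$A_2$-$B_2$-$A_3$, which meets the branch vertices $s,p_1,q_1,p_2,q_2$ (and, after contracting the tail of $\pi$ to the sink, $t$) in this order. Unwinding the orientation convention produces five directed subpaths of $G$: the forward paths $A_1\colon s\rightsquigarrow p_1$, $A_2\colon q_1\rightsquigarrow p_2$, $A_3\colon q_2\rightsquigarrow t$, and the backward subpaths, realized in $G$ as directed paths $B_1\colon q_1\rightsquigarrow p_1$ and $B_2\colon q_2\rightsquigarrow p_2$. These constitute the skeleton of a domino on the six branch vertices. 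To supply the extra arcs that upgrade this skeleton to a domino-with-ears, I would reuse exactly the flow-forced detours from the proof of Lemma~\ref{lem:oneAlt}: since every edge of $B_1$ carries RAWE flow there is a $y$-used path $C_1$-$B_1$-$D_1$, giving detours $C_1\colon s\rightsquigarrow q_1$ and $D_1\colon p_1\rightsquigarrow t$; applying the same reasoning to $B_2$ (which likewise carries RAWE flow) supplies the symmetric pair, and the RNWE-used paths through $A_1$ and $A_2$ pin down the remaining connections. Contracting each of these subpaths to a single arc then realizes the domino-with-ears.

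The crux, and the step I expect to be the main obstacle, is guaranteeing that all these subpaths are internally vertex-disjoint apart from their shared branch endpoints, so that the contraction yields \emph{exactly} the domino-with-ears graph rather than a smaller or differently-connected minor; pinning down the precise arc-to-domino correspondence is part of this bookkeeping. Here I would lean on the minimality of $\pi$: any internal vertex shared between two of the pieces would allow one to splice a new alternating $s$-$t$ walk with strictly fewer alternations---for instance, a crossing between some $A_i$ and some $B_j$ lets one short-cut across the crossing and collapse a forward subpath---contradicting the choice of $\pi$. A careful enumeration of the possible crossings (two forward pieces, a forward against a backward piece, and the ear detours against the skeleton) should show that each is either impossible or yields such a shortcut, so that after discarding redundant pieces the contracted image is genuinely a domino-with-ears.

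Having produced a domino-with-ears minor, we contradict the hypothesis that $G$ excludes it; hence the minimizing alternating path $\pi$ has at most one alternation, and Lemma~\ref{lem:oneAlt} (or Lemma~\ref{lem:zeroAlt}) applied to it gives $\prav\le 1+2\gamma\kappa$. As a consistency check aligned with the remark following the statement, contracting the two prescribed vertical arcs of the extracted minor should collapse it to the Braess graph, confirming that the one-alternation obstruction lies properly above the zero-alternation (series-parallel) one and that the extraction has not introduced spurious structure.
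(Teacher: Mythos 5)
Your proposal is, at its core, the paper's own proof: assume the bound fails, combine the existence of an alternating path with Lemmas~\ref{lem:zeroAlt} and~\ref{lem:oneAlt} to force at least three disjoint forward subpaths, take for each reversed subpath $B_i$ a $y$-flow-carrying path $C_i$-$B_i$-$D_i$, and delete/contract everything else so that the nine pieces $A_1,A_2,A_3,B_1,B_2,C_1,C_2,D_1,D_2$ become the nine arcs of the domino-with-ears. Even your treatment of the case of more than two alternations matches: the paper merges the far endpoint of $A_3$ with $t$ by contracting $D_3$, which is the same device as your contraction of the tail of $\pi$ into the sink.

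Where you depart from the paper is the disjointness discussion, and there your proposed mechanism would not close the argument. Crossings among the pieces of $\pi$ itself are vacuous ($\pi$ is a path, hence its subpaths are internally disjoint), so the crossings that actually matter involve the detours $C_i, D_i$ --- with each other or with pieces of $\pi$. These cannot be excluded by minimality of the number of alternations: $C_i$ and $D_i$ are merely flow-carrying under the RAWE $y$, and their edges carry no prescribed membership in the partition into $A$ and $B$ (an edge of $C_1$ may well lie in $B$). Hence splicing $\pi$ through such a crossing --- say, replacing $A_1[s \to v]$-$B_1$ by $A_1[s \to v]$-$C_1[v \to q_1]$ --- need not produce an alternating path at all, since forward subpaths must consist of $A$-edges; no contradiction with minimality arises. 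So the ``careful enumeration'' you defer to cannot be completed by shortening $\pi$; if one insists on addressing disjointness, it must be done by re-choosing the detours and the branch sets in the minor extraction. For comparison, the paper's proof contains no disjointness argument whatsoever: it simply performs the stated deletions and contractions on the nine pieces. Two smaller bookkeeping corrections: no RNWE-used paths are needed anywhere (the nine pieces above already complete the construction), and in the extracted graph the two ears are $A_1$ and $A_3$ (the arcs joining vertices at distance two), while the domino itself is formed by $C_1,B_1,D_1,C_2,A_2,B_2,D_2$ --- your identification of the five $\pi$-pieces as the domino skeleton and the four detours as the added ears is backwards, although the union is the same nine-arc graph.
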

\begin{proof}
To reach a contradiction, assume the equilibria of an instance not containing a
domino-with-ears graph as a minor satisfy that $\CC(x)>\CC(y)(1+2\gamma
\kappa)$. Since an alternating path must exist \cite{Nikolova-Stier15}, by
Lemmas~\ref{lem:zeroAlt} and~\ref{lem:oneAlt}, it must have strictly more than 2
disjoint forward subpaths.  Let us denote the alternating path by
$A_1$-$B_1$-$A_2$-$\ldots$-$B_\eta$-$A_{\eta+1}$ as indicated in
Fig.~\ref{fig:forbiddenGraphs}(c), for $\eta\geq 2$. For each $i$, a path
$C_i$-$B_i$-$D_i$ must exist because $B_i$ is a flow-carrying subpath under the
flow $y$.  By deleting all vertices not belonging to $A_i$, $B_i$, $C_i$ or
$D_i$, deleting all edges in $C_3,\ldots, C_\eta$ and $D_4,\ldots, D_\eta$,
contracting the subpath $D_3$ so its tail and head become a single vertex, and
contracting all edges of each of the subpaths
$C_1,C_2,D_1,D_2,A_1,A_2,A_3,B_1,B_2$ into a single edge, we obtain the
domino-with-ears graph. This is a contradiction to the assumption that the
domino-with-ears graph was not a minor.
\qed\end{proof}

The previous proof implies that any graph that has the 
domino-with-ears graph as a minor cannot belong to the family of
instances for which $\pras\leq 1+2\gamma\kappa$.  
Lemma~\ref{lem:zeroAlt} and Theorem~\ref{thm:ear} together imply that $\pras\leq
1+\eta \gamma\kappa$ holds for the set of instances admitting alternating paths
in which the number of disjoint forward subpaths is not more than $\eta$, for
$\eta\le 2$. Although this does not fully generalize Theorem~\ref{thm:ub} to
the case of standard deviations, it a first step in that direction.

To provide matching lower bounds for the results of this section, we note that
all paths in the proof of Theorem~\ref{tightexample} have at most one edge with
nonzero variance. For this reason, all the lower bounds in Section~\ref{sec:lb}
work by reinterpreting the variances as standard deviations and the
variance-to-mean ratios as coefficients of variation.  In summary, we can copy
the lower bound elements of Corollaries~\ref{cor:verticestight} and Theorem~\ref{cor:altertight}
to have the results for the standard deviation case.

\begin{corollary}\label{cor:altertightSD}
The upper bounds $\pras\le 1+\gamma\kappa$ and $\pras\le 1+2\gamma\kappa$
corresponding to graphs that admit alternating paths with 1 and 2 disjoint
forward subpaths, respectively, are tight.
\end{corollary}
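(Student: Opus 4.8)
The plan is to establish tightness of the two upper bounds ($\pras \le 1+\gamma\kappa$ from Lemma~\ref{lem:zeroAlt} and $\pras \le 1+2\gamma\kappa$ from Lemma~\ref{lem:oneAlt}) by exhibiting matching lower bounds, and the key observation is that we already have suitable instances in hand. As the paragraph preceding the corollary notes, every path in the construction of Theorem~\ref{tightexample} contains at most one edge with nonzero variance. This is the crucial structural feature: when a path carries variance only on a single edge, the path's standard deviation $\sigma_{\route}(y)=(\sum_{e\in\route}\sigma_e^2)^{1/2}$ collapses to $\sigma_e$ for that one edge, so there is no discrepancy between summing variances and summing standard deviations. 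Consequently the mean-stdev objective $\pathcost^{\gamma}_{\route}=\ell_{\route}+\gamma\sigma_{\route}$ evaluates identically to the mean-var objective on these specific instances (after reinterpreting each edge variance $v_e$ as $\sigma_e^2$ and the variance-to-mean ratio $\kappa$ as the coefficient of variation).

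First I would invoke Theorem~\ref{tightexample} with $i=1$ and $i=2$, reading the variance parameter $\kappa$ instead as the coefficient-of-variation bound demanded by the mean-stdev model. For $i=1$ the base-case Braess instance $G^1$ admits an alternating path with $\eta=2^1=2$ disjoint forward subpaths and yields $\CC(x)/\CC(z)=1+2\gamma\kappa$; for the single-edge-variance reason above, the same flows $x$ and $z$ remain the RAWE and RNWE in the mean-stdev model with identical costs, so this instance certifies $\pras\ge 1+2\gamma\kappa$, matching Lemma~\ref{lem:oneAlt}. For the $\pras\le 1+\gamma\kappa$ bound of Lemma~\ref{lem:zeroAlt}, which applies to instances admitting an alternating path with a single forward subpath (zero alternations), I would use a series-parallel instance with one variance-bearing edge: a single risk-averse path of mean $1$ and variance $\kappa$ against a parallel risk-neutral path, giving $\CC(x)/\CC(z)=1+\gamma\kappa$. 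In both cases I would verify that the equilibrium conditions carry over verbatim, again because each path touches at most one stochastic edge.

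The main step is simply the verification that reinterpreting variances as squared standard deviations leaves the equilibrium structure and the cost ratios unchanged; once that is granted, the numerical bounds $1+\gamma\kappa$ and $1+2\gamma\kappa$ transfer directly from Section~\ref{sec:lb}. I would then conclude that, combined with the upper bounds of Lemma~\ref{lem:zeroAlt} and Lemma~\ref{lem:oneAlt}, the $\pras$ for the families admitting alternating paths with $1$ and $2$ disjoint forward subpaths equals exactly $1+\gamma\kappa$ and $1+2\gamma\kappa$ respectively, establishing tightness.

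I expect the only genuine obstacle to be confirming that the path-cost computations really are invariant under the variance-to-stdev reinterpretation. The subtlety is that the mean-stdev objective is non-additive in general, so one must explicitly check that on paths with a unique stochastic edge the nonadditivity disappears and that \emph{no} feasible deviating path in these instances accumulates variance on two or more edges in a way that could break the equilibrium inequality; inspecting the topology of $G^i$ (Fig.~\ref{example23}) shows each path meets exactly one $\kappa$-variance edge, so this check is routine rather than delicate. Everything else is an immediate transcription of the already-established mean-var lower bounds.
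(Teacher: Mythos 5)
Your proposal is correct and follows essentially the same route as the paper: the paper's own justification is exactly your key observation that every path in the construction of Theorem~\ref{tightexample} carries at most one edge with nonzero variance, so the mean-var lower bounds of Section~\ref{sec:lb} (in particular the Braess instance $G^1$, which admits an alternating path with two disjoint forward subpaths) transfer verbatim to the mean-stdev model once variances are reinterpreted as standard deviations and $\kappa$ as a coefficient-of-variation bound. The only difference is minor: for the $\eta=1$ case the paper implicitly leans on the series-parallel tightness already established in \cite{Nikolova-Stier15}, whereas you supply an explicit two-parallel-link instance, which is a harmless and in fact slightly more self-contained way to close that case.
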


\section{Conclusion}

We have considered the effect of risk averse players on selfish routing with
stochastic travel times, captured by mean and variance functions of flow,
following the mean-var and mean-stdev risk models in Nikolova and
Stier-Moses~\cite{Nikolova-Stier15}.  Our main conceptual contribution is a new
perspective and understanding of efficiency loss due to risk-averse behavior in
terms of {\em topological} versus {\em functional} measures, the first one
depending on the topology of the network and independent of the expected
latency functions, and the second depending on the class of allowed latency
functions and independent of the network topology, similarly to previous price
of anarchy analysis.  Our main technical contribution is the inductive
construction of a family of graphs that can be adapted with appropriate mean
and variance functions to yield both topological and functional lower bounds.
We also show how to generalize previous price of anarchy analysis for
deterministic congestion games based on variational
inequalities~\cite{css-congestion} to provide a functional upper bound here.
Our results may in turn inspire a re-investigation of the classic price of
anarchy results in deterministic settings through the lens of the topological
analysis.  

We are just at the start of understanding and characterizing the effect of
risk-averse player preferences on network equilibria.  This work opens the way
to multiple new research directions, including investigating alternative risk
models, heterogeneous players and multiple demands.  A key challenge is to
develop a better understanding and technical approaches to non-additive
risk-models, such as the mean-stdev model, which have so far resisted fully
general upper bound analysis for arbitrary graphs---a step in that direction is
our third technical contribution on the mean-stdev model for a family of graphs
that contain and generalize series-parallel graphs and the Braess graphs.  
Moreover, although the construction used for the various lower bounds 
can also provide functional lower bounds for the standard deviation case,
functional upper bounds for the standard deviation case so far remain
elusive and will be the subject of future research.

\subsubsection{ACKNOWLEDGEMENTS.\\} This work was supported
in part by NSF grant numbers CCF-1216103, CCF-1350823 and CCF-1331863, a Google Faculty Research Award, CONICET Argentina Grant PIP
112-201201-00450CO and ANPCyT Argentina Grant PICT-2012-1324. We would also like to thank the participants at the Dagstuhl Seminar on Dynamic Traffic Models in Transportation Science.


\bibliographystyle{abbrvnat}
\small
\bibliography{Bibfiles/rwe,Bibfiles/soue,Bibfiles/routeguid,Bibfiles/telecom,Bibfiles/eddie-thesis,Bibfiles/robust,Bibfiles/risk,Bibfiles/risk2}

\end{document}